\theoremstyle{definition}
\newtheorem*{proposition*}{Proposition}
\theoremstyle{plain}
\theoremstyle{definition}
\DeclareMathOperator*{\Tr}{Tr}
\DeclareMathOperator*{\Span}{Span}
\DeclareMathOperator{\inc}{inc}
\DeclareMathOperator{\rel}{rel}
\DeclareMathOperator{\HS}{HS}
\DeclareMathOperator{\unif}{unif}
\DeclareMathOperator{\PR}{PR}
\newcommand{\bs}{\boldsymbol}
\begin{document}
\title{Quantum coherence and the localization transition}

\date{\today}

\author{Georgios Styliaris}
\email [e-mail address: ]{styliari@usc.edu}
\author{Namit Anand}
\author{Lorenzo Campos Venuti}
\author{Paolo Zanardi}

\affiliation{Department of Physics and Astronomy, and Center for Quantum Information Science and Technology, University of Southern California, Los Angeles, California 90089-0484, USA}

\begin{abstract}

A dynamical signature of localization in quantum systems is the absence of transport which is governed by the amount of coherence that configuration space states possess with respect to the Hamiltonian eigenbasis. To make this observation precise, we study the localization transition via quantum coherence measures arising from the resource theory of coherence.
We show that the escape probability, which is known to show distinct behavior in the ergodic and localized phases, arises naturally as the average of a coherence measure. Moreover, using the theory of majorization, we argue that broad families of coherence measures can detect the uniformity of the transition matrix (between the Hamiltonian and configuration bases) and hence act as probes to localization. We provide supporting numerical evidence for Anderson and Many-Body Localization (MBL).

For infinitesimal perturbations of the Hamiltonian, the differential coherence defines an associated Riemannian metric. We show that the latter is exactly given by the dynamical conductivity, a quantity of experimental relevance which is known to have a distinctively different behavior in the ergodic and in the many-body localized phases. Our results suggest that the quantum information-theoretic notion of coherence and its associated geometrical structures can provide useful insights into the elusive nature of the ergodic-MBL transition.

\end{abstract}

\maketitle

\section{Introduction}

One of the conceptual pillars of quantum theory is the superposition principle and, directly arising from it, the notion of \textit{quantum coherence} \cite{bohm2012quantum}. A quantum state is deemed to be coherent with respect to a complete set of states if it can be expressed as a nontrivial linear superposition of these states.
Recently, there has been an effort to formulate a resource theory of quantum coherence \cite{aberg2006quantifying,PhysRevLett.113.140401,RevModPhys.89.041003}. The focus of this theory has been quantum information processing tasks, since generating and preserving quantum coherence constitutes one of the essential prerequisites.

In this work, we utilize the powerful tools that arose from this information-theoretic perspective on coherence to study phase transitions in quantum one- and many-body systems. More specifically, we focus on Anderson \cite{anderson1958absence,lagendijk2009fifty} and Many-Body Localization (MBL) transitions \cite{basko_metal-insulator_2006,pal_many-body_2010,nandkishore_many-body_2015}. These ``infinite temperature'' or ``eigenstate'' phase transitions  are characterized by a abrupt change occurring at the level of whole Hamiltonian eigenstates as opposed, e.g., to the ground-state only. 

A connection between quantum coherence and the transition of a quantum system from an ergodic phase to a localized one can be conceptually formalized as follows. One of the signatures of localization is the absence of transport, with respect to some properly defined positional degree of freedom. On the other hand, transport properties are governed by the coherence between the Hamiltonian eigenbasis and the positional one. Hence one should expect an abrupt change in the coherence properties of the Hamiltonian eigenvectors at the transition point.

Here we make the above intuition quantitatively precise by investigating the amount of coherence that can be generated on average by the quantum dynamics starting from incoherent states, the \emph{Coherence-Generating Power} (CGP) of a quantum evolution.
Such quantities essentially capture the difference between two complete orthonormal sets of eigenstates associated with two hermitian operators \cite{zanardi2018quantum,styliaris2019quantifying}. We first show that a well-studied quantity in localization, the \textit{escape probability} (or, equivalently, the \textit{second participation ratio}) can be expressed directly as a coherence average.  We then argue that broad families of coherence measures, arising from the the resource-theoretic perspective, can be used to define an ``order parameter'' for localization. We provide supporting numerical evidence for both Anderson and MBL transitions. Moreover, we show that the differential-geometric version of our average coherence is exactly given by an infinite temperature dynamical conductivity, an experimentally accessible quantity, which is known to behave differently in the ergodic and MBL phases \cite{Prelovsek2017MBL}. These findings open the possibility of observing experimentally the coherence generating power of quantum dynamics.    

This paper is organized as follows. In \autoref{Sec_prelim} we introduce measures of coherence for quantum states and explain how one can average coherence over a complete set of states in order to obtain the associated CGP, i.e., a quantifier of coherence for unitary quantum operations. We then investigate general mathematical properties of the aforementioned coherence averages and connect with the theory of matrix majorization and the escape probability. In \autoref{Sec_Anderson} we examine, both analytically and numerically, the behavior of two of the introduced measures in the Anderson localization transition and connect with the associated localization length. In \autoref{Sec_MBL} we numerically study the introduced measures for a many-body system that exhibits a transition to a MBL phase. In \autoref{sec_diffgeom} we turn to the Riemannian metric that results from the average coherence between bases that differ infinitesimally and relate with the MBL transition. Finally, in \autoref{sec_conclusions} we conclude with a discussion and future work. All proofs can be found in section \ref{Sec_app_proofs} of the Appendix.


\section{Quantum coherence of states and operations} \label{Sec_prelim}

\subsection{Coherence of states}

Consider a quantum system, described by a finite dimensional Hilbert space $\mathcal H \cong \mathbb C ^d$.
A state $\ket{\psi} \in \mathcal H$ is deemed \textit{coherent} with respect to a fiducial orthonormal basis $\left\{ \ket{\phi_i} \right\}_{i=1}^{d}$ if the expansion $\ket{\psi} = \sum_i a_i \ket{\phi_i}$ contains more than one non-vanishing term, otherwise it is called \textit{incoherent}. This notion extends straightforwardly to the set of density operators $\mathcal S(\mathcal H)$. Any $\rho \in \mathcal S(\mathcal H)$ is regarded as \textit{coherent} with respect to the preferred basis if the corresponding matrix $\rho_{ij}$ has nonzero off-diagonal elements, otherwise it is termed \textit{incoherent}.

Quantum coherence is usually defined relative to a reference basis. In fact, one needs a weaker notion than that of a basis, since phase degrees of freedom and ordering of an orthonormal basis $\left\{ \ket{\phi_i} \right\}_{i=1}^{d}$ are physically redundant. In other terms, bases differing by transformations of the form $\ket {\phi_j} \mapsto e^{i \theta_j} \ket{\phi_{\pi (j)}}$ ($\pi \in  S_{d}$ is a permutation)
are equivalent as far as coherence is concerned. The relevant object, taking into account this freedom, is a complete set of orthogonal, rank-1 projection operators $B=\left\{\Pi_i\right\}_{i=1}^{d}$, where $\Pi_i \coloneqq \ket{\phi_i}\!\bra{\phi_i}$. In the rest of this work, we will refer for convenience to the set $B$ itself as a ``basis''.

While all states non-diagonal in $B$ carry coherence, some of them might resemble incoherent states more than others. This notion is made precise by the introduction of ($B$-dependent) functionals, $c_{B}: \mathcal S(\mathcal H) \to \mathbb R_0^+$ that are said to quantify coherence \cite{PhysRevLett.113.140401}. Quantifiers of coherence (also called \textit{coherence monotones}) satisfy $c_B(\rho_{\inc}) = 0$ for all states diagonal in $B$ and, in addition, are non-increasing under the free operations of the resource theory \footnote{We note that there exist various proposals for the free operations in the resource theories of coherence (see \cite{PhysRevA.94.052336} for more details). In the following, we will use the term Incoherent Operations for the free operations but, in fact, all results hold for any class that contains Strictly Incoherent Operations \cite{PhysRevLett.116.120404,PhysRevX.6.041028}.}.
In this work, we make use of the \textit{2-coherence} and the \textit{relative entropy of coherence}, defined respectively by
\begin{subequations}
\begin{align}
c^{(2)}_B (\rho) &\coloneqq \left\| \left( \mathcal I -\mathcal D_{B} \right) \rho  \right\|^2_2  = \sum_{i \ne j} \left| \rho_{ij} \right|^2  \\
c^{(\rel)}_B (\rho) &\coloneqq S \left[ \mathcal D_B  \left( \rho \right)  \right]  - S \left( \rho \right) \,\;,
\end{align}
\end{subequations}
where we have introduced the $B$-dephasing superoperator
\begin{align} \label{Eq_dephasing_def}
\mathcal D_B (X) \coloneqq \sum_{i=1}^{d}  \Pi_i X \Pi_i \,\;,
\end{align}
$S$ above denotes the usual von-Neumann entropy $S(\rho) \coloneqq -\Tr \left(  \rho \log (\rho)\right)$ and the (Schatten) 2-norm of an operator $X$ is defined as $\left\| X \right\|_2 \coloneqq \sqrt{\Tr\left(X^\dagger X\right)}$. Relative entropy of coherence is a central measure in the resource theories of coherence and admits an operational interpretation, e.g.,  as a conversion rate of information-theoretic protocols \cite{PhysRevLett.116.120404,PhysRevLett.120.070403}. The 2-coherence admits an interpretation as an escape probability, as will be shown momentarily \footnote{We note, however, that the 2-coherence might fail to satisfy the monotonicity property under some classes of free operations.}.

\subsection{Coherence of unitary quantum processes via probabilistic averages}

In this section we discuss how, given a coherence measure $c_B$ and a unitary superoperator $\mathcal U$, one can capture the ability of the unitary $\mathcal U$ to generate coherence by computing the average amount of coherence that can be generated starting from incoherent states. This is the \textit{Coherence-Generating Power} (CGP) of the quantum operation $\mathcal U$ \cite{coherence_1,coherence_2,coherence_3,zanardi2018quantum}. 



Consider a basis $B = \left\{ \Pi_i \right\}_{i=1}^{d}$ and define a probabilistic ensemble of incoherent states, i.e., a random variable $\rho_{\inc} (\bs p) = \sum_i p_i \Pi_i$, where $\left\{ p_i \right\}_i$ $(p_i \ge 0$, $\sum_{i}p_i = 1)$ are random and distributed according to a prescribed measure $\mu(\bs p)$. Then, the corresponding CGP
\begin{align}
  C \left( \mathcal U , c_B , \mu \right) \coloneqq \int d \mu(\bs p) \, c_B \left[ \mathcal U \left( \rho_{\inc} (\bs p) \right) \right]
\end{align}
characterizes the average effectiveness of the quantum process $\mathcal U$ to generate coherence out of random incoherent states in $B$. Since the unitary $\mathcal U(X) = U  X U^\dagger$ can be thought of as connecting the bases $B$ and $B' = \{ \mathcal U \left( \Pi_i \right) \}_i$, one can also interpret $C \left( \mathcal U , c_B , \mu \right) $ as the average coherence with respect to $B$ of a random state which is incoherent in $B'$.

Without any additional structure, it is a natural choice to consider averaging only over pure states with equal weight over each of them , i.e., take

\begin{align}
\mu_{\unif}(\bs p) \coloneqq \frac{1}{d} \sum_i \delta\left( \bs p - \bs e_i \right) 
\end{align}
where $(\bs e_i)_j \coloneqq \delta_{ij}$ \footnote{In \cite{coherence_1}, the measure considered was the uniform over the (whole) simplex of probability distributions, instead of just the extremal ones, resulting in an extra factor ${(d+1)^{-1}}$.}.
This choice directly leads to the expression
\begin{align}  
	C \left( \mathcal U , c_B ,\mu _{\unif} \right) = \frac{1}{d} \sum_{i=1}^{d} c_B \left[ \mathcal U \left(  \Pi_i \right) \right] \,\;. \label{Eq_CGP_general}
\end{align}

We now simplify Eq.~\eqref{Eq_CGP_general} when the coherence measure is the 2-coherence or the relative entropy of coherence, namely for
\begin{subequations}
\begin{align}
C^{(2)}_B \left( \mathcal U \right) &\coloneqq C \left( \mathcal U , c^{(2)}_B ,\mu _{\unif} \right) \,\;, \\
C^{(\rel)}_B \left( \mathcal U \right) &\coloneqq C \left( \mathcal U , c^{(\rel)}_B ,\mu _{\unif} \right) \,\;.
\end{align}
\end{subequations}

\begin{restatable}{prop}{cgpformulas}
\label{Prop_CGP_formulas}
Let $B = \left\{ \Pi_i \right\}_{i=1}^{d}$ be a basis, $\mathcal U$ a unitary quantum process and $X_{\mathcal U}$ denote the bistochastic matrix with elements $\left( X_{\mathcal U}\right)_{ij} \coloneqq \Tr \left( \Pi_i \, \mathcal U(\Pi_j) \right)$. Then,
\begin{align} \label{Eq_2CGP_Tr}
C^{(2)}_B \left( \mathcal U \right)  = 1 - \frac{1}{d} \Tr \left( X_{\mathcal U} ^T X_{\mathcal U}  \right) \,\;.
\end{align}
and
\begin{align}\label{Eq_relCGP_H}
C^{(\rel)}_B \left( \mathcal U \right) = H(X_{\mathcal U}) \,\;,
\end{align}
where $H(X) \coloneqq - \frac{1}{d} \sum_{i,j} X_{ij} \log (X_{ij})$ denotes the generalization of the Shannon entropy over bistochastic matrices.
\end{restatable}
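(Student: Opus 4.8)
The plan is to start from the reduced CGP formula in Eq.~\eqref{Eq_CGP_general}, which expresses both quantities as the uniform average $\frac{1}{d}\sum_j c_B[\mathcal U(\Pi_j)]$ over the images of the basis projectors. The single structural fact driving everything is that the diagonal of $\mathcal U(\Pi_j)$ in the basis $B$ reproduces the $j$-th column of $X_{\mathcal U}$: indeed $\braket{\phi_i|\mathcal U(\Pi_j)|\phi_i} = \Tr(\Pi_i\,\mathcal U(\Pi_j)) = (X_{\mathcal U})_{ij}$, and these numbers are real and nonnegative. Consequently, for each $j$ the dephased image $\mathcal D_B[\mathcal U(\Pi_j)]$ is the diagonal state carrying the probability vector $\{(X_{\mathcal U})_{ij}\}_i$. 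I would establish this identity first, then treat the two measures separately.

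For the 2-coherence I would use that $\mathcal D_B$ is an orthogonal projector on the Hilbert--Schmidt space: it is manifestly idempotent (since $\Pi_i\Pi_{i'} = \delta_{ii'}\Pi_i$) and self-adjoint with respect to $\langle A,B\rangle = \Tr(A^\dagger B)$. Hence $\mathcal I - \mathcal D_B$ is the complementary projector and the Pythagorean identity gives $c^{(2)}_B(\rho) = \|\rho\|_2^2 - \|\mathcal D_B\rho\|_2^2$. Applying this to $\rho = \mathcal U(\Pi_j)$, which is a rank-one projector so that $\|\mathcal U(\Pi_j)\|_2^2 = \Tr[\mathcal U(\Pi_j)] = 1$, while $\|\mathcal D_B[\mathcal U(\Pi_j)]\|_2^2 = \sum_i (X_{\mathcal U})_{ij}^2$ by the diagonal identity above, yields $c^{(2)}_B[\mathcal U(\Pi_j)] = 1 - \sum_i (X_{\mathcal U})_{ij}^2$. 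Averaging over $j$ and recognizing $\sum_{i,j}(X_{\mathcal U})_{ij}^2 = \Tr(X_{\mathcal U}^T X_{\mathcal U})$ produces Eq.~\eqref{Eq_2CGP_Tr}.

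For the relative entropy of coherence I would exploit purity directly: each $\mathcal U(\Pi_j)$ is a pure state, so $S[\mathcal U(\Pi_j)] = 0$ and the measure collapses to $c^{(\rel)}_B[\mathcal U(\Pi_j)] = S\big(\mathcal D_B[\mathcal U(\Pi_j)]\big)$. Since $\mathcal D_B[\mathcal U(\Pi_j)]$ is diagonal with eigenvalues $(X_{\mathcal U})_{ij}$, its von Neumann entropy is just the Shannon entropy of that column, $-\sum_i (X_{\mathcal U})_{ij}\log (X_{\mathcal U})_{ij}$. Averaging over $j$ then reproduces $H(X_{\mathcal U})$ exactly as defined, giving Eq.~\eqref{Eq_relCGP_H}.

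Both derivations are short and essentially computational, so I do not anticipate a genuine obstacle. The only points demanding a little care are the bookkeeping that $\mathcal D_B$ acts as an orthogonal Hilbert--Schmidt projector (so that the cross terms vanish and the Pythagorean split is legitimate) and the standard $0\log 0 = 0$ convention, needed so that vanishing entries of $X_{\mathcal U}$ do not spoil the entropy sum.
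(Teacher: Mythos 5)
Your proposal is correct and follows essentially the same route as the paper's proof: the Pythagorean split $c^{(2)}_B(\rho) = \|\rho\|_2^2 - \|\mathcal D_B\rho\|_2^2$ via the self-adjoint idempotent $\mathcal D_B$ together with $\|\mathcal U(\Pi_j)\|_2^2 = 1$ for the 2-coherence, and $S[\mathcal U(\Pi_j)] = 0$ plus the von Neumann entropy of the dephased column for the relative entropy. The only differences are cosmetic (index conventions and invoking rank-one purity rather than unitary invariance of the 2-norm), so there is nothing to add.
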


The two CGP quantities are related as
\begin{align}
    C_B ^{(\rel)} \ge -\log \left( 1 - C_B^{(2)} \right) \,\;.
\end{align}
The inequality follows from the above Proposition, together with the
concavity of the logarithmic function.

\subsection{General properties of coherence-generating power measures}

Both quantities $C_B^{(2)} (\mathcal U)$ and $C_B^{(\rel)} (\mathcal U)$ introduced earlier can be considered as functions of the (transition) matrix $X_{\mathcal U}$, instead of $\mathcal U$ itself. In other words, the phases associated with $U_{ij}$ (treated as a matrix in the $B$ basis) are irrelevant. In fact, as we will show momentarily, this is a general feature of any CGP measure $C \left( \mathcal U , c_B ,\mu _{\unif} \right)$ arising from a coherence monotone $c_B$.



Motivated by the above observation, we define as a \textit{generalized CGP measure} any function $f_B$ mapping bistochastic matrices to non-negative real numbers such that:
\begin{enumerate}[(i)]
\item $f_B \left( \Pi  \right) = 0$ if $\Pi \in S_d$ is a permutation.
\item $ f_B \left(\Pi  X  \Pi' \right)  =  f_B \left( X  \right) $, where $\Pi,\Pi' \in S_d$ are permutations.
\item $f_B \left(M  X  \right) \ge f_B \left(X  \right) $ for any bistochastic matrix $M$.
\end{enumerate}

\begin{restatable}{prop}{cgpproperties}
\label{Prop_CGP_properties}
Let $c_B$ be a coherence measure. Then, the corresponding Coherence-Generating Power $f_B (X_{\mathcal U}) \coloneqq C \left( \mathcal U , c_B ,\mu _{\unif} \right)$ satisfies $\mathrm{(i) - (iii)}$ above.
\end{restatable}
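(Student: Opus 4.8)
The plan is to reduce the statement to a single Schur-concavity property of the coherence measure restricted to pure states. The starting observation is that $\mathcal U(\Pi_j) = \ket{\psi_j}\!\bra{\psi_j}$ is a pure state with $\ket{\psi_j} = U\ket{\phi_j}$, whose population profile in the basis $B$ is exactly the $j$-th column of $X_{\mathcal U}$, since $(X_{\mathcal U})_{ij} = \left| \bra{\phi_i} U \ket{\phi_j} \right|^2$. I would first show that $c_B[\mathcal U(\Pi_j)]$ depends \emph{only} on this column. Indeed, any two pure states with identical $B$-populations differ by a diagonal unitary $D = \diag(e^{i\theta_k})$; since $X \mapsto D X D^\dagger$ is a reversible incoherent operation, applying monotonicity of $c_B$ in both directions forces $c_B$ to be invariant under it. Hence there is a well-defined function $g_B$ on probability vectors with $c_B[\mathcal U(\Pi_j)] = g_B\!\left(\mathrm{col}_j\, X_{\mathcal U}\right)$, so that $f_B(X) = \tfrac{1}{d}\sum_j g_B(\mathrm{col}_j\, X)$ is genuinely a function of $X$ alone.

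With this reduction, properties (i) and (ii) become immediate. For (i), the columns of a permutation matrix are the standard basis vectors $\bs e_k$, and $g_B(\bs e_k) = c_B(\Pi_k) = 0$ because $\Pi_k$ is incoherent. For (ii), right multiplication $X \mapsto X\Pi'$ merely permutes the columns of $X$, leaving the unordered sum $\sum_j g_B(\mathrm{col}_j\, X)$ unchanged, while left multiplication $X\mapsto \Pi X$ permutes the entries \emph{within} each column. Invariance of $g_B$ under such entry permutations follows exactly as above: a permutation unitary $P_\sigma$ is a reversible incoherent operation, so $g_B(\Pi \bs q) = g_B(\bs q)$. Combining the two gives $f_B(\Pi X \Pi') = f_B(X)$.

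The crux is property (iii). For any bistochastic $M$ and probability vector $\bs q$, the Hardy--Littlewood--P\'olya theorem gives $M\bs q \prec \bs q$, so it suffices to prove that $g_B$ is Schur-concave, i.e.\ $g_B(\bs p) \ge g_B(\bs q)$ whenever $\bs p \prec \bs q$. This is where the resource-theoretic structure is essential: by the majorization (Nielsen-type) criterion for pure-state coherence, $\bs p \prec \bs q$ implies that the pure state with amplitudes $\sqrt{p_i}$ can be deterministically mapped to the one with amplitudes $\sqrt{q_i}$ by an incoherent operation, whence monotonicity of $c_B$ yields $g_B(\bs p) \ge g_B(\bs q)$. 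Applying this to each column, $f_B(MX) = \tfrac{1}{d}\sum_j g_B(M\,\mathrm{col}_j\,X) \ge \tfrac{1}{d}\sum_j g_B(\mathrm{col}_j\,X) = f_B(X)$.

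I expect the main obstacle to be packaging the pure-state Schur-concavity cleanly: it is the one place where I must combine monotonicity under incoherent operations with the majorization characterization of deterministic pure-state conversion, and care is needed to ensure the argument goes through for the intended class of free operations (those containing Strictly Incoherent Operations), for which both the diagonal-unitary/permutation invariances and the pure-state majorization criterion continue to hold.
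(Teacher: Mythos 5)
Your proof is correct and follows essentially the same route as the paper's: both reduce $f_B$ to a sum over columns of $X_{\mathcal U}$ by exploiting invariance of $c_B$ under diagonal and permutation unitaries (reversible free operations), and both prove (iii) by combining Hardy--Littlewood--P\'olya with the pure-state majorization criterion for deterministic conversion under (strictly) incoherent operations, plus monotonicity of $c_B$. Your packaging of the key step as Schur-concavity of a column function $g_B$ is just a clean restatement of the paper's explicit construction of the states $\ket{\psi_j}=\sum_i \sqrt{(MX_{\mathcal U})_{ij}}\,\ket{\phi_i}$, so no substantive difference remains.
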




On physical grounds, all quantities $C \left( \mathcal U , c_B ,\mu _{\unif} \right)$ are expected to quantify how ``uniform'' or ``spread'' is the transition matrix $X_{\mathcal U}$ between the bases $B$ and $B' = \mathcal U(B)$.
This intuition is reflected in part (iii) of \autoref{Prop_CGP_properties}: ``post-processing'' the transition matrix $X \mapsto MX$ by any bistochastic matrix $M$ will certainly increase any CGP measure $C \left( \mathcal U , c_B ,\mu _{\unif} \right)$, where $c_B$ can be any coherence monotone. 

Generalized CGP measures can be thought of as functions that characterize the uniformity of a (bistochastic) matrix. They always achieve their maximum value over the transition matrix $\left( X_{\mathcal V} \right)_{ij} = 1/d$, i.e., when $\mathcal V$ connects two unbiased bases, as follows by combining properties $(ii)$ and $(iii)$. In a similar manner, the minimum value is achieved over permutation matrices and is set to zero (as a normalization) by $(i)$. For instance, any concave function that satisfies properties $(i)$ and $(ii)$ automatically satisfies $(iii)$, i.e., is a generalized CGP measure.

Examples of generalized measures arising from previous works on CGP (see \cite{coherence_2,zanardi_quantum_2018,styliaris2019quantifying}) are
\begin{align} \label{Eq_det}
    f_B^{(\det)} \left( X_{\mathcal V}  \right) & \coloneqq 1 - \left| \det \left( X_{\mathcal V}  \right) \right| ^{\frac{1}{d}} \\
    f_B^{(\infty)} \left( X_{\mathcal V}  \right) & \coloneqq \left\|  I - X_{\mathcal V}^T X_{\mathcal V} \right\|_\infty \,\;,
\end{align}
where $\left\| \left(  \cdot  \right) \right\|_\infty$ denotes the operator norm. Notice that $f_B^{(\det)} (X_{\mathcal V}) = 1 - \left( \prod_i s_i \right)^{\frac{1}{d}} $ and also $0 \le f_B^{(\det)} \left( X_{\mathcal V}  \right) \le 1$, while $f_B^{(\infty)} \left( X_{\mathcal V}  \right) = 1 - s_d^2$ (here $s_i$ are the singular values of $X_{\mathcal V}  $ sorted in decreasing order).


A systematic way to capture the amount of uniformity of a matrix is provided by the notion of multivariate majorization \cite{marshall_inequalities:_2011}. An example is \textit{column majorization}, in which a stochastic matrix $X$ column majorizes another stochastic matrix $Y$, denoted as $X \succ^c Y$, if $X^c_i \succ Y^c_i$ $\forall i$; here $X^c_i$ and $Y^c_i$ stand for the $i^{\text{th}}$ column vector of $X$ and $Y$, respectively, and ``$\succ$'' denotes ordinary majorization of probability vectors.

It is then natural to ask whether the CGP quantities $C \left( \mathcal U , c_B ,\mu _{\unif} \right)$ arising from different coherence measures $c_B$ jointly capture some notion of uniformity of the transition matrix $X_{\mathcal U}$, as described by multivariate majorization. We answer this in the affirmative via the proposition below. 

\begin{restatable}{prop}{cgpmajorization}
\label{Prop_majorization}

Let $c_B$ be a coherence measure. Then, the corresponding Coherence-Generating Power $f_B (X_{\mathcal U}) \coloneqq C \left( \mathcal U , c_B ,\mu _{\unif} \right)$ considered over bistochastic matrices is a monotone of column majorization, i.e., $X \succ^c Y$ $\Rightarrow$ $f_B(X) \le f_B(Y)$. Conversely, if $f_B(X) \le f_B(Y)$ for all $f_B$ arising from continuous coherence monotones over pure states, then $X \succ^c Y$.

\end{restatable}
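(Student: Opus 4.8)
The plan is to reduce every coherence-generating power to a single sum over the columns of the transition matrix and then read off both implications from the Hardy--Littlewood--P\'olya theory of majorization. The key preliminary observation is a \emph{reduction lemma}: since $\mu_{\unif}$ is supported on the pure incoherent states $\Pi_j$ and $\mathcal U$ maps pure states to pure states, the monotone $c_B$ entering $C(\mathcal U,c_B,\mu_{\unif})$ is only ever evaluated on the pure states $\mathcal U(\Pi_j)$. On pure states a coherence monotone depends solely on the probability vector in $B$ and is a symmetric function of it that is non-increasing under majorization; this is the pure-state characterization of coherence monotones that follows from the majorization criterion for deterministic incoherent interconversion of pure states. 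Writing $g$ for the continuous, symmetric, Schur-concave function (vanishing on the vertices of the simplex) that represents $c_B$ on pure states, and using that the probability vector of $\mathcal U(\Pi_j)$ in $B$ has components $\Tr(\Pi_i\,\mathcal U(\Pi_j))=(X_{\mathcal U})_{ij}$, i.e. it is exactly the $j$-th column $X^c_j$, I obtain
\begin{equation}
f_B(X) = \frac{1}{d}\sum_{j=1}^{d} g\!\left( X^c_j \right) ,
\end{equation}
an identity I would then take as the definition of $f_B$ on all bistochastic $X$.

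The forward implication is then immediate: if $X \succ^c Y$, so that $X^c_j \succ Y^c_j$ for every $j$, Schur-concavity of $g$ gives $g(X^c_j) \le g(Y^c_j)$ term by term, and summing over $j$ yields $f_B(X) \le f_B(Y)$. For the converse I would argue by contraposition, using a distinguished subfamily of monotones. The sum of the $k$ largest components $s_k(p) \coloneqq \sum_{l=1}^{k} p^{\downarrow}_l$ is convex and symmetric, so each $e_k \coloneqq 1 - s_k$ is a continuous, symmetric, concave function vanishing on the vertices---hence a legitimate pure-state coherence monotone---and the collection $\{e_k\}_{k=1}^{d-1}$ generates ordinary majorization through the partial-sum criterion. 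Feeding the $e_k$ into the hypothesis $f_B(X)\le f_B(Y)$ converts it into the column-summed inequalities $\sum_j s_k(X^c_j) \ge \sum_j s_k(Y^c_j)$ for all $k$, the natural candidate for reconstructing $X \succ^c Y$.

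The hard part, and the genuine obstacle, is that every CGP is invariant under permutations of the columns (part (ii) of \autoref{Prop_CGP_properties}), so no single monotone can probe an individual column: the $e_k$ deliver only the \emph{column-summed} partial-sum inequalities, which are strictly weaker than the per-column majorization demanded by $X\succ^c Y$ (indeed, swapping two columns of a bistochastic matrix leaves every $f_B$ unchanged while generically destroying the fixed pairing, and the purely additive monotones $g(p)=\sum_i\phi(p_i)$ see only the flattened multiset of entries). The remaining work is therefore to upgrade the summed inequalities to a matched, column-by-column statement; I would do so by interpreting the conclusion up to an optimal column matching and producing that matching via a Hall-type/transportation argument, in which the row-stochasticity constraints are what rigidify the comparison of the two column multisets. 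Verifying that the family of continuous Schur-concave monotones separates precisely the column-majorization order---and not merely the flattened entrywise order---is the crux, and since localized concave test functions do not exist, this cannot be done by isolating one column and must instead exploit the full family jointly.
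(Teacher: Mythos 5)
Your reduction lemma and your forward implication are correct and essentially coincide with the paper's argument. The paper proves the forward direction by introducing, for each column $j$, the pure states $\ket{\psi_j}=\sum_i \sqrt{Y_{ij}}\,\ket{\phi_i}$ and $\ket{\psi'_j}=\sum_i \sqrt{X_{ij}}\,\ket{\phi_i}$ and invoking the majorization criterion for pure-state conversion under strictly incoherent operations \cite{PhysRevA.91.052120}: $X^c_j \succ Y^c_j$ guarantees the conversion $\ket{\psi_j}\!\bra{\psi_j} \mapsto \ket{\psi'_j}\!\bra{\psi'_j}$, hence $c_B\left(\ket{\psi'_j}\!\bra{\psi'_j}\right) \le c_B\left(\ket{\psi_j}\!\bra{\psi_j}\right)$ for every monotone, and averaging over $j$ gives $f_B(X)\le f_B(Y)$. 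Your phrasing --- pure-state monotones are continuous, symmetric, Schur-concave functions $g$ of the $B$-populations, so that $f_B(X)=\frac{1}{d}\sum_j g\left(X^c_j\right)$, with Schur-concavity applied column by column --- is the same argument with the convertibility criterion absorbed into the characterization of $g$; the extension of $f_B$ from unistochastic to all bistochastic matrices is likewise made explicitly in the paper (in the proof of \autoref{Prop_CGP_properties}).

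Your converse, however, is not a proof: you extract only the column-summed Ky Fan inequalities $\sum_j s_k(X^c_j) \ge \sum_j s_k(Y^c_j)$ and explicitly defer the ``crux'' --- upgrading summed inequalities to a per-column statement --- to an unexecuted Hall-type matching argument, so the second half of the Proposition is left unproven. For comparison, the paper's converse runs as follows: it first shows that $c_B(\ket{\psi}\!\bra{\psi})=\sum_i \phi\left(\Tr\left(\Pi_i \ket{\psi}\!\bra{\psi}\right)\right)$ is a legitimate continuous pure-state monotone for \emph{every} continuous concave $\phi$ (again via \cite{PhysRevA.91.052120} combined with the Hardy--Littlewood--P\'olya equivalence), feeds this additive family into the hypothesis to obtain $\sum_{ij}\phi(X_{ij}) \le \sum_{ij}\phi(Y_{ij})$ for all such $\phi$, and then concludes $X\succ^c Y$ by citing Hardy--Littlewood--P\'olya ``in the context of column majorization.'' You should be aware, though, that the obstruction you identified is genuine and sits precisely at that last step: the aggregated inequality, by HLP, characterizes majorization between the \emph{flattened} $d^2$-entry vectors, not fixed-pairing column majorization, and your own observation --- $X=Y\Pi$ gives $f_B(X)=f_B(Y)$ for every generalized CGP by property (ii) of \autoref{Prop_CGP_properties}, while $X\succ^c Y$ generically fails when the columns are distinct --- shows that no family of CGP measures can certify the fixed-index pairing, so the converse can hold at most up to a column permutation. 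In short: your forward half matches the paper; your converse half is incomplete as written, but the incompleteness you flag is not an artifact of your route --- the paper's final sentence passes over exactly this point, and no completion of your Hall-type program could restore the literal fixed-pairing statement.
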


The last part of the above proposition establishes the fact that there are enough coherence monotones over pure states one can consider such that, if all corresponding measures $f_B$ are monotonic, then column majorization is guaranteed. In other words, these functions form a complete set of monotones. In that sense, the defined family of CGP measures jointly captures a notion of uniformity for the transition matrix that is at least as strict as column majorization.

\subsection{Coherence and escape probability}

Let us consider  a finite-dimensional quantum system whose dynamics is specified by a Hamiltonian $H$. Suppose the system is initialized in a state $\ket{\psi}$ and one is interested in the \textit{escape probability} 
\begin{align}
\mathcal P_\psi \coloneqq  1 - \overline{ \left| \braket{\psi | e^{-i H t} | \psi} \right|^2 } \,\;,
\end{align}
where the overline denotes the infinite time-average
\begin{align}
\overline{f(t)} \coloneqq \lim_{T \to \infty} \frac{1}{T} \int_{0}^T dt \, f(t) \,\;.
\end{align}
For instance, in the case of a particle hopping on a lattice which is initialized over a single site $j$, $\mathcal P_j$ corresponds to the average probability of the particle escaping the initial site.




At this point, let us note that in finite dimensions observable quantities such as $\braket{ A(t)} \coloneqq \Tr[A(t) \rho_{0}]=\Tr[A \rho(t)]$
do not converge to any limit as $t\to\infty$. Instead they start from an initial value
and then oscillate around a value given by $\overline{\braket{A(t)}}$
\cite{PhysRevLett.101.190403,PhysRevE.79.061103,PhysRevA.81.022113,PhysRevLett.107.010403}.
Since if a function $f(t)$ has a limit for $t\to\infty$, this
limit must coincide with $\overline{f(t)}$, the infinite
time average provides a way to extract the infinite time limit even
when the latter strictly speaking does not exist. 

If the Hamiltonian in consideration has non-degenerate energy gaps (also known as the non-resonance condition),
the \textit{effective dimension} $d_\mathrm{eff} \coloneqq (1 - \mathcal P_\psi)^{-1}$ 
dictates the equilibration properties of the system: the larger $d_\mathrm{eff}$  the smaller are the temporal fluctuations of the observables around their mean values~\cite{PhysRevE.79.061103,PhysRevLett.101.190403}, i.e., equilibration is stronger. Since many-body localization is a mechanism by which quantum systems can escape equilibration, it is perhaps no surprise that the effective dimension is related to the localization transition (see the Appendix~\ref{Sec_App_thermalization} for more details on related quantities).

After introducing the basic framework, we are now ready to present our first result. The following Proposition establishes the fact that the 2-coherence of a state, quantified with respect to the Hamiltonian eigenbasis, is the time-averaged escape probability of the state.


\begin{restatable}{prop}{cgpmeaning}
\label{Prop_CGP_meaning}
Let $H = \sum_i E_i \ket{\phi_i}\!\bra{\phi_i}$ be a non-degenerate Hamiltonian.
\begin{enumerate}[(i)]
\item For any state $\ket \psi$,
\begin{align}
\mathcal P_\psi = c_{B}^{(2)}(\ket{\psi}\!\bra{\psi}) \,\;,
\end{align}
where $B = \left\{ \ket{\phi_i}\!\bra{\phi_i} \right\}_i$ is the eigenbasis of the Hamiltonian.
\item Denote the escape probability averaged over a set of orthonormal states $B'=\left\{ \ket{i}\!\bra{i} \right\}_{i=1}^{d}$ as
\begin{align} \label{Eq_return_prob_CGP}
\mathcal P_{B'} \coloneqq \frac{1}{d} \sum_{i=1}^{d} \mathcal P_i \,\;.
\end{align}
Then,
\begin{align}  \label{Eq_return_prob_CGP_2}
\mathcal P_{B'}  = C_{B}^{(2)}(\mathcal V)  = C_{B'}^{(2)}(\mathcal V ^\dagger)  \,\;, 
\end{align}
where $B = \left\{ \ket{\phi_i}\!\bra{\phi_i} \right\}_i$ is the eigenbasis of the Hamiltonian and $\mathcal{V}(\cdot)\coloneqq V \left(\cdot\right) V^\dagger$, where $V = \sum_{i} \ket{i} \! \bra{\phi_i}$  is the intertwiner between $B$ and $B'$.
\end{enumerate}
\end{restatable}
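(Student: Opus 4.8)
The plan is to establish part (i) by a direct spectral computation and then derive part (ii) from it by unpacking the definition of the CGP in Eq.~\eqref{Eq_CGP_general} and invoking the closed form of \autoref{Prop_CGP_formulas}.

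For part (i), I would expand $\ket\psi = \sum_i a_i \ket{\phi_i}$ in the eigenbasis, so that $\braket{\psi | e^{-iHt} | \psi} = \sum_i |a_i|^2 e^{-iE_i t}$ and therefore $|\braket{\psi | e^{-iHt} | \psi}|^2 = \sum_{i,j} |a_i|^2 |a_j|^2 e^{-i(E_i - E_j)t}$. Because $H$ is non-degenerate, the infinite-time average annihilates every oscillating term with $E_i \neq E_j$, leaving $\overline{|\braket{\psi | e^{-iHt} | \psi}|^2} = \sum_i |a_i|^4$, so that $\mathcal P_\psi = 1 - \sum_i |a_i|^4$. On the coherence side, the matrix elements of $\rho = \ket\psi\!\bra\psi$ in $B$ are $\rho_{ij} = a_i \bar a_j$, hence $c_B^{(2)}(\rho) = \sum_{i\neq j} |a_i|^2 |a_j|^2 = \big(\sum_i |a_i|^2\big)^2 - \sum_i |a_i|^4 = 1 - \sum_i |a_i|^4$ by normalization. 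The two expressions coincide, proving (i).

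For part (ii), the first equality is then almost immediate. By definition $\mathcal P_{B'} = \tfrac1d \sum_i \mathcal P_i$, and part (i) gives $\mathcal P_i = c_B^{(2)}(\ket i\!\bra i)$. Since the intertwiner satisfies $V\ket{\phi_i} = \ket i$, we have $\mathcal V(\Pi_i) = \ket i\!\bra i$, so that $\mathcal P_{B'} = \tfrac1d \sum_i c_B^{(2)}\!\big(\mathcal V(\Pi_i)\big) = C_B^{(2)}(\mathcal V)$ directly from Eq.~\eqref{Eq_CGP_general}. For the second equality I would pass to the transition matrices of \autoref{Prop_CGP_formulas}: computing $(X_{\mathcal V})_{ij} = |\braket{\phi_i | j}|^2$ with respect to $B$, and using $V^\dagger \ket j = \ket{\phi_j}$ to obtain $(X_{\mathcal V^\dagger})_{ij} = |\braket{i | \phi_j}|^2 = (X_{\mathcal V})_{ji}$ with respect to $B'$, shows that the two transition matrices are transposes of each other. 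Since Eq.~\eqref{Eq_2CGP_Tr} writes the 2-CGP as $1 - \tfrac1d \Tr(X^T X) = 1 - \tfrac1d \sum_{ij} X_{ij}^2$, a functional that is manifestly invariant under $X \mapsto X^T$, the equality $C_B^{(2)}(\mathcal V) = C_{B'}^{(2)}(\mathcal V^\dagger)$ follows at once.

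The only delicate point—the main obstacle, though a mild one—is the bookkeeping in the final equality: each transition matrix must be read off in its own reference basis ($B$ for $\mathcal V$, $B'$ for $\mathcal V^\dagger$), and one must track carefully that $V$ sends $\ket{\phi_i}\mapsto\ket i$ while $V^\dagger$ sends $\ket j\mapsto\ket{\phi_j}$, which is precisely what swaps the roles of rows and columns. Once this is handled, the transpose relation together with the transpose-invariance of the $\Tr(X^T X)$ functional closes the argument.
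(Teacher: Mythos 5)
Your proof is correct, and it reaches the same conclusions by a somewhat different route than the paper. For part (i), the paper never expands $\ket\psi$ in components: it observes that non-degeneracy yields the superoperator identity $\overline{\mathcal U_t} = \mathcal D_B$, and then exploits that $\mathcal D_B$ is a self-adjoint projection with respect to the Hilbert--Schmidt inner product, so that $\mathcal P_\psi = 1 - \braket{\Pi_\psi, \mathcal D_B \Pi_\psi} = \left\| (\mathcal I - \mathcal D_B)\Pi_\psi \right\|_2^2 = c_B^{(2)}(\Pi_\psi)$. Your spectral computation ($\overline{e^{-i(E_i-E_j)t}} = \delta_{ij}$ under non-degeneracy, then matching $1 - \sum_i |a_i|^4$ on both sides) encodes exactly the same mechanism in coordinates; it is more elementary, while the paper's coordinate-free identity is more reusable (it reappears, e.g., in the time-averaged CGP computation of Appendix~\ref{Sec_App_time_avg}). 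For part (ii), the first equality is handled identically in both proofs. For the second equality, the paper proceeds in two steps: first $C_B^{(2)}(\mathcal V) = C_{B'}^{(2)}(\mathcal V)$, via the intertwining relation $\mathcal D_B = \mathcal V^\dagger \mathcal D_{B'} \mathcal V$ and unitary invariance of the 2-norm, and then $C_{B'}^{(2)}(\mathcal V) = C_{B'}^{(2)}(\mathcal V^\dagger)$ from $X_{\mathcal V^\dagger} = X_{\mathcal V}^T$ together with Eq.~\eqref{Eq_2CGP_Tr}. You collapse this into a single step by reading off each transition matrix in its own reference basis ($B$ for $\mathcal V$, $B'$ for $\mathcal V^\dagger$), verifying they are transposes, and invoking the transpose-invariance of $\Tr(X^T X)$ --- your bookkeeping here is correct and slightly more direct. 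What the paper's detour buys is the intermediate identity $C_B^{(2)}(\mathcal V) = C_{B'}^{(2)}(\mathcal V)$ in its own right, which underlies the remark following the proposition that the roles of the two bases can be interchanged; given transpose-invariance, the two routes are of course equivalent.
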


The last equation above demonstrates that the role of the bases $B$ and $B'$ can be interchanged. For instance, one can equivalently think in terms of the average coherence over Hamiltonian eigenstates, quantified with respect to the position basis. 

A physically relevant family of unitary transformations $\mathcal U_t$ is the time evolution generated by the Hamiltonian of a system. One can, for instance, consider the time-average of $C_{B'}^{(2)}(\mathcal U_t)$. For a Hamiltonian with non-degenerate energy gaps, the aforementioned quantity admits the closed form expression
\begin{align} \label{Eq_CGP_NRC}
    \overline{C_{B'}^{(2)} (\mathcal U_t)} = 
    1 - \frac{2}{d} \sum_{ij} \braket{X^c_i , X^c_j}^2 + \frac{1}{d} \sum_{i} \braket{X^c_i , X^c_i}^2 \,\;,
\end{align}
here $X^c_i$ stands for the column vector of the transition matrix $X_{\mathcal{V}}$, while $V = \sum_{i} \ket{i} \! \bra{\phi_i}$ is the intertwiner between the Hamiltonian eigenbasis $B = \{ \ket{\phi_i}\!\bra{\phi_i}  \}$ and $B' = \{ \ket{i}\!\bra{i} \}_i$. In fact, the resulting quantity $f_B^{(\text{time-avg})} (X_{\mathcal V}) \coloneqq \overline {C_{B'}^{(2)} (\mathcal U_t)  }$ fails to be a generalized CGP measure. The details can be found in \autoref{Sec_App_time_avg} of the Appendix.

The identification between return probability and 2-coherence gives a physical interpretation to the latter and its associated CGP.
More importantly, the escape (or return) probability is a well-known measure in the theory of localization \cite{anderson_absence_1958,Luitz_Lev_review_2017} and the fact that it can be thought of as coherence gives rise to the question: can other measures arising from the resource theoretic framework of coherence give rise to probes of localization in a similar manner? In what follows, we demonstrate that this is indeed the case, by considering Anderson and MBL.

\section{Coherence-generating power and localization in the 1-D Anderson model} \label{Sec_Anderson}

The Anderson model \cite{anderson_absence_1958} in one dimension is described by the Hamiltonian
\begin{align}
H_W = - \sum_{i=1}^L \left( \ket{i}\!\bra{i+1} +  \ket{i+1}\!\bra{i} \right) + \sum_{i=1}^L \epsilon_i  \ket{i}\!\bra{i} \label{Eq_Anderson_Hamiltonian}
\end{align}
over $L$ sites (i.e., $d = L$) with periodic boundary conditions, where the on-site energies $\epsilon_i$ are independent and identically distributed (i.i.d.)~random variables and follow a uniform distribution of width $2W$. It is known that the model is localized for any degree of disorder $W>0$ \cite{hundertmark2008short}.

\begin{figure}[t]

\subfloat[average return probability (Anderson)]{%
  \includegraphics[clip,width=\columnwidth]{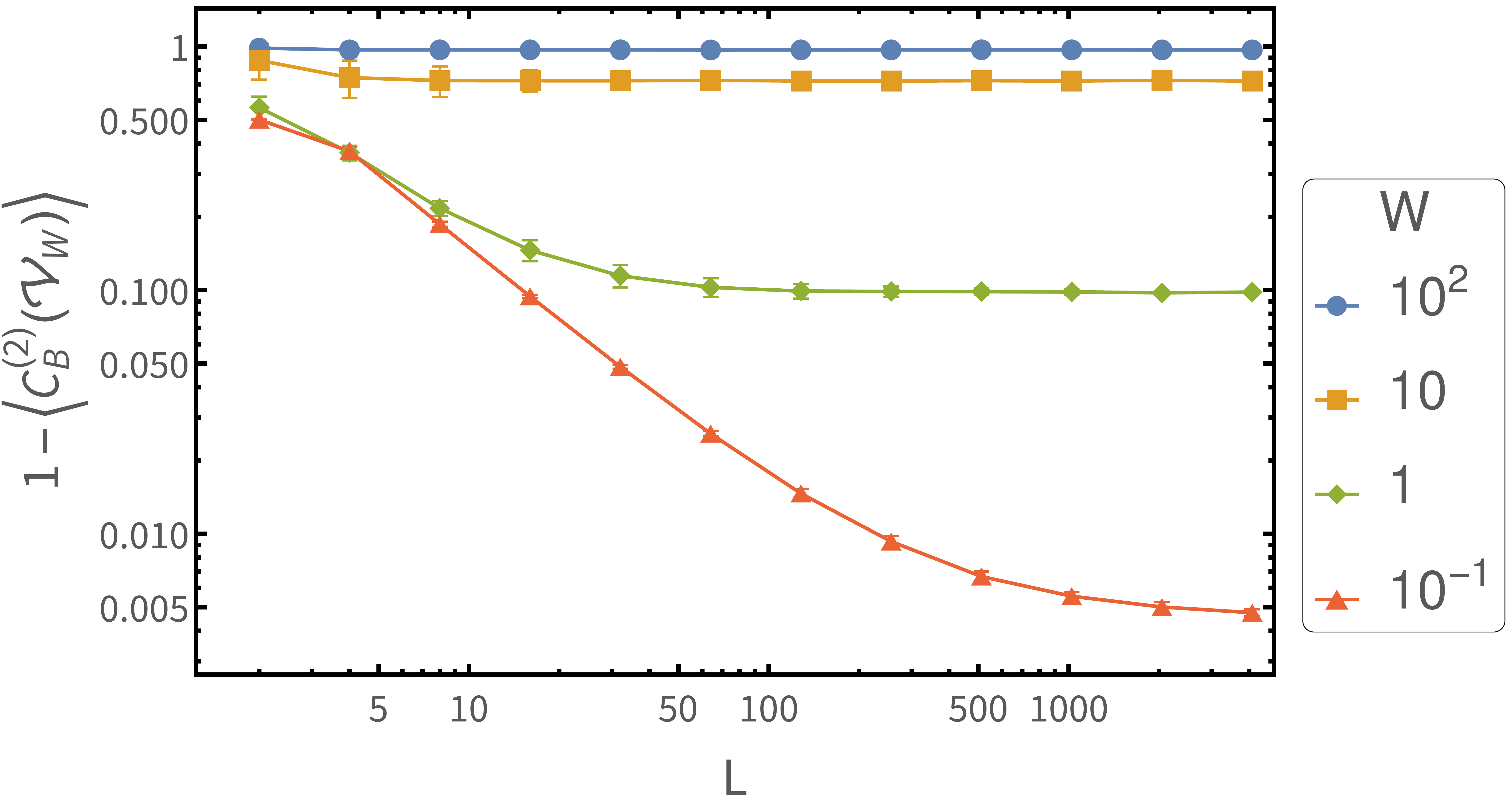}%
}

\subfloat[relative entropy CGP (Anderson)]{%
  \includegraphics[clip,width=.94 \columnwidth]{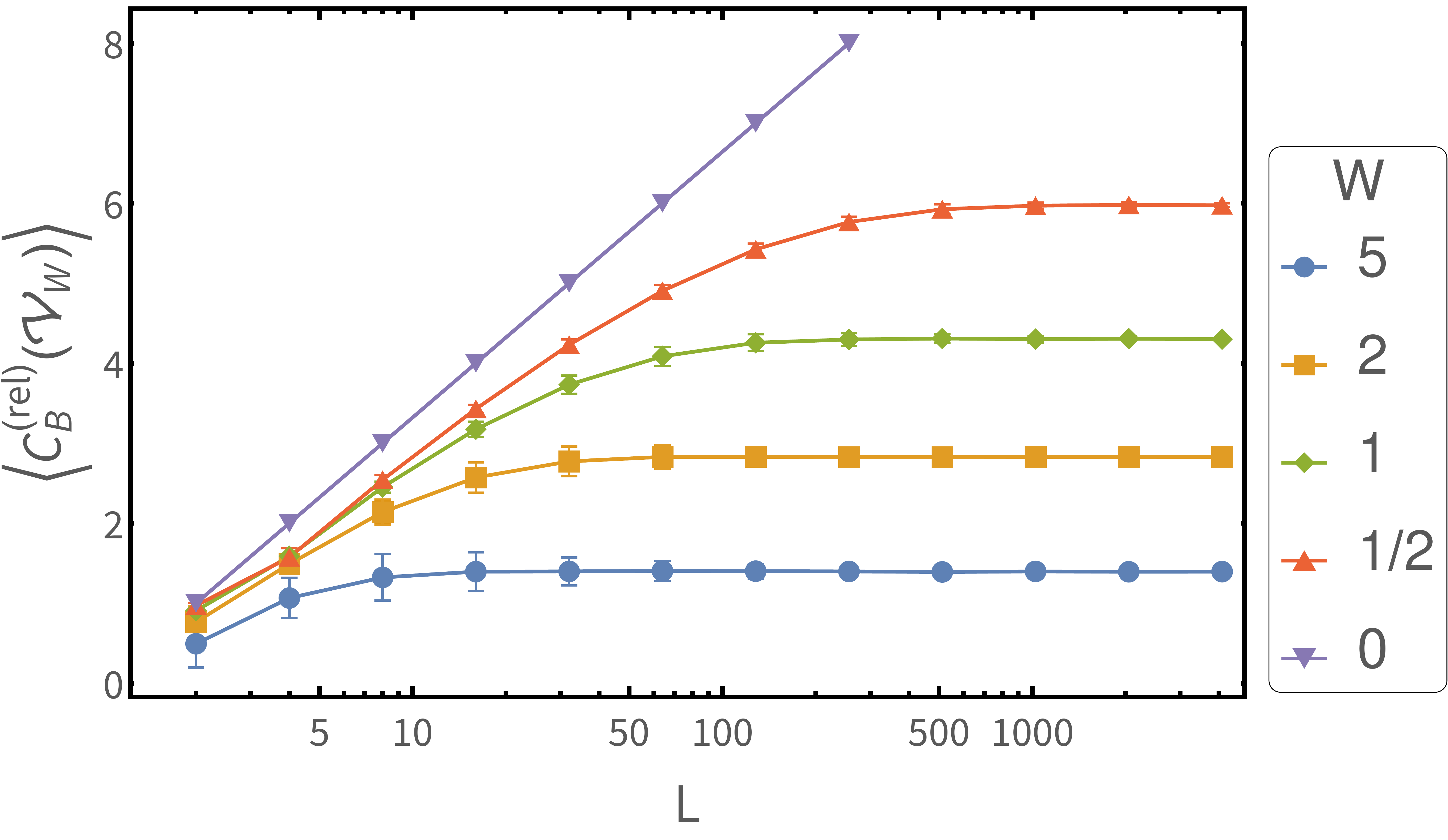}%
}

\caption{\textbf{(a)} Log-log plot of the average return probability $1 - \braket{ C_B^{(2)} \left( \mathcal V_W \right) }$ as a function of the system size $L$ for different values of the disorder strength $W$. The system is in the localized phase for all $W>0$, since the asymptotic escape probability is strictly less that 1 for $L \to \infty$. \\
\textbf{(b)} Log-linear plot of $\braket{ C_B^{(\rel)} \left( \mathcal V_W \right) }$ as a function of the system size $L$ for different values of the disorder strength $W$. The system is in the localized phase for all $W>0$, in which the asymptotic value is finite. In the ergodic phase ($W = 0$) $\braket{ C_B^{(\rel)} \left( \mathcal V_W \right) }$ diverges logarithmically. \\
The number of realizations range from \num{30000} for small sizes to just 8 for the largest size. Error bars represent one standard deviation. Entropy has logarithm with base 2.} \label{Fig_arp}

\end{figure}


Localization can be dynamically characterized by the absence of transport, a notion referring to the interplay between the ``position'' basis $B' = \left\{ \ket {i}\!\bra{i} \right\}_{i=1}^L$ in Eq.~\eqref{Eq_Anderson_Hamiltonian} and the Hamiltonian eigenbasis $B$. Here, we consider coherence quantified with respect to the latter basis. Let us now examine the behavior of functionals $C_B(\mathcal V_W)$, where the unitary $V_W$ is the intertwiner between Hamiltonian and position eigenbases. In fact, \autoref{Prop_CGP_meaning} immediately implies that $\braket{C_B^{(2)}(\mathcal V_W)}$
is a probe to localization ($ \braket{\cdot} $ denotes averaging over disorder). More specifically, localization implies that in thermodynamic limit the return probability (averaged over disorder) in the localized phase is non-vanishing, i.e., $\lim_{L \to \infty} \braket{ \overline{\left| \braket{j | e^{-i H_W t} | j} \right|^2} } > 0$ for any $W>0$. In turn, this is equivalent to $\mathcal P_j < 1$ (in the thermodynamic limit) for all sites $j$, hence also 
\begin{align}
\lim_{L \to \infty} \braket{ C_B^{(2)} \left( \mathcal V_W \right) } < 1 
\end{align}
by Eq.~\eqref{Eq_return_prob_CGP}. Notice that $H_W$ for $W>0$ is generically non-degenerate
so \autoref{Prop_CGP_meaning} applies. We verify this claim by numerical simulations (see \autoref{Fig_arp}).

The Hamiltonian $H_{W=0}$ is degenerate in the ergodic phase, hence the intertwiner $\mathcal V_{H_{W=0}}$ is not well-defined. Nevertheless, as we show in Appendix \ref{Sec_App_degenerate}, for any choice of eigenbasis of $H_W$ it holds that
\begin{align}
\lim_{L \to \infty}  C_B^{(2)} \left( \mathcal V_{W=0} \right) = 1 \,\;,
\end{align}
hence the average coherence  $\braket{ C_B^{(2)} \left( \mathcal V_W \right) }$ unambiguously distinguishes the two behaviors.

The role of the quantity $C_B^{(2)}(\mathcal V_W)$ might seem special as a probe to localization due to its interpretation as average escape probability. In fact, other measures, arising from an information-theoretic viewpoint of coherence have analogous properties. Let's now consider the relative entropy CGP of the intertwiner, namely $C^{(\rel)}_{B}(\mathcal V_W)$. Its value as a function of the system size $L$ for different values of the disorder strength $W$ is plotted in \autoref{Fig_arp}.
In the ergodic phase $W=0$ it diverges logarithmically 
\begin{align}
C^{(\rel)}_{B}(\mathcal V_{W=0}) \sim \log (L) \,\;. \label{Eq_logarithmic_divergence}
\end{align}
This can be easily verified analytically for an intertwiner connecting two mutually unbiased bases, i.e., for $ \left| \braket{i | \phi_j} \right| = 1/\sqrt{L}$ for all $i,j$. In that case Eq.~\eqref{Eq_logarithmic_divergence} holds with equality, as it directly follows from \autoref{Prop_CGP_formulas}. In Appendix \ref{Sec_App_degenerate} we show that the result again holds in the thermodynamic limit independently of the specific choice for the intertwiner.

We now provide a non-rigorous argument to relate the averages $\braket{ C^{(2)}_{B}(\mathcal V_{W>0})  }$ and $\braket{ C^{(\rel)}_{B}(\mathcal V_{W>0})  }$ to the corresponding localization lengths $\xi_j$. In the localized phase, the eigenvectors typically decay exponentially, i.e.,
\begin{align}
\left| \braket{i|\phi_j} \right|^2 \le c_j \exp \left( -|i - \alpha_j| / \xi_j  \right) \,\;,
\end{align}
where $\alpha_j$ is the site around which $\ket{\phi_j}$ is localized, while due to the periodic boundary conditions $\left|  i - \alpha_j \right|$ above should be understood as $\min (\left|  i - \alpha_j \right|,\left|  i - \alpha_j \pm L \right|)$). If one uses the ansatz
\begin{align}
\left( X_{\mathcal V_{W}} \right)_{ji} = \left| \braket{i|\phi_j} \right|^2 = c_j \exp \left( -|i - \alpha_j| / \xi_j  \right) \,\;,
\end{align} 
then for $L \gg 1$
\begin{subequations} \label{Eq_Heuristic}
\begin{align} \label{Eq_Heuristic_1}
    \braket{ C^{(2)}_{B}(\mathcal V_{W>0})  } \cong 1 -\frac{1}{L} \sum_j \frac{\tanh^2[(2\xi_j)^{-1}]}{\tanh(\xi^{-1}_j)} 
\end{align}
and
\begin{multline}  \label{Eq_Heuristic_2}
\braket{ C^{(\rel)}_{B}(\mathcal V_{W>0})  } \cong \\
\frac{1}{L} \sum_{j = 1}^L \left( \left[ \xi_j \sinh (1/\xi_j)  \right]^{-1} - \ln  \left( \tanh\left[ (2 \xi_j)^{-1} \right] \right)  \right) 
\end{multline}
\end{subequations}
(entropy here has natural logarithm). A detailed derivation can be found in Appendix \ref{Sec_App_derivation_tanh}.


The expression \eqref{Eq_Heuristic_2} for $\xi_j \gg 1$ can be expanded as $\braket{C_B^{(\rel)}\left( \mathcal V_{\Gamma}  \right) } = \frac{1}{L} \sum_{j=1}^L \left( 1 + \ln \left( 2 \xi_j \right) + O(\xi_j^{-2}) \right)$, which is consistent with the numerically observed behavior that it remains finite in the localized phase while it diverges logarithmically as a function of $L$ in the ergodic one.

The accuracy of equations \eqref{Eq_Heuristic} can be assessed by comparing with cases for which an analytical expression can be obtained for the localization lengths $\xi_j$ as a function of the disorder strength. We now consider such a case, described by a Hamiltonian as in Eq.~\eqref{Eq_Anderson_Hamiltonian}, but with on-site energies that follow a Cauchy distribution with parameter $\Gamma$ and vanishing mean (also known as Lloyd model \cite{Lloyd_1969}). We focus for concreteness on Eq.~\eqref{Eq_Heuristic_1} and we denote the corresponding Hamiltonian and intertwiner as $H_\Gamma$ and $\mathcal V_\Gamma$, respectively.
Utilizing a well-known result from Thouless \cite{Thouless_1972} that connects the localization length with the energy spectrum, one can express the RHS of Eq.~\eqref{Eq_Heuristic_1} as a function of the disorder strength $\Gamma$. This allows for a direct comparison with numerical evaluations of the mean $\braket{ C_B^{(2)} ( \mathcal V_\Gamma) }$, yielding a sound agreement for small disorder ($\Gamma < 1$). We present the details in \autoref{Sec_Cauchy} of the Appendix.



\section{Coherence-generating power and many-body localization} \label{Sec_MBL}





We now turn to a disordered quantum many-body system admitting a phase diagram with an ergodic phase at low enough disorder and an MBL phase at strong disorder. For this purpose, we consider a transverse-field Heisenberg spin-1/2 chain in a random magnetic field (along the $\hat{z}$-axis) over $L$ sites ($d = 2^L$) with periodic boundary conditions, described by the Hamiltonian




\begin{align}
H _ { \mathrm { XXX } } &= \frac { 1 } { 2 } \sum _ { i = 1 } ^ { L  } \left[ \sigma _ { i } ^ { x } \sigma _ { i + 1 } ^ { x } + \sigma _ { i } ^ { y } \sigma _ { i + 1 } ^ { y } +  \sigma _ { i } ^ { z } \sigma _ { i + 1 } ^ { z } \right] \nonumber \\
&+ h_x \sum _ { i = 1 } ^ { L } \sigma _ { i } ^ { x } +  \sum _ { i = 1 } ^ { L } w _ { i } \sigma _ { i } ^ { z } \,\;,
\end{align}
where the $h_x$ is the strength of the transverse field and the local field strengths are i.i.d.~random variables with uniform distribution $w_{i} \in \left[ -W,W \right]$. Notice that the transverse field breaks the rotational symmetry of the Hamiltonian. The model has been extensively studied numerically and is known to exhibit a transition from an ergodic to an MBL phase at disorder strength  $W_C \approx 3.7$ (in the absence of the transverse field term), see \cite{Luitz_Lev_review_2017,RevModPhys.91.021001} and references therein.

\begin{figure}[t]
\centering
\includegraphics[width=0.47\textwidth]{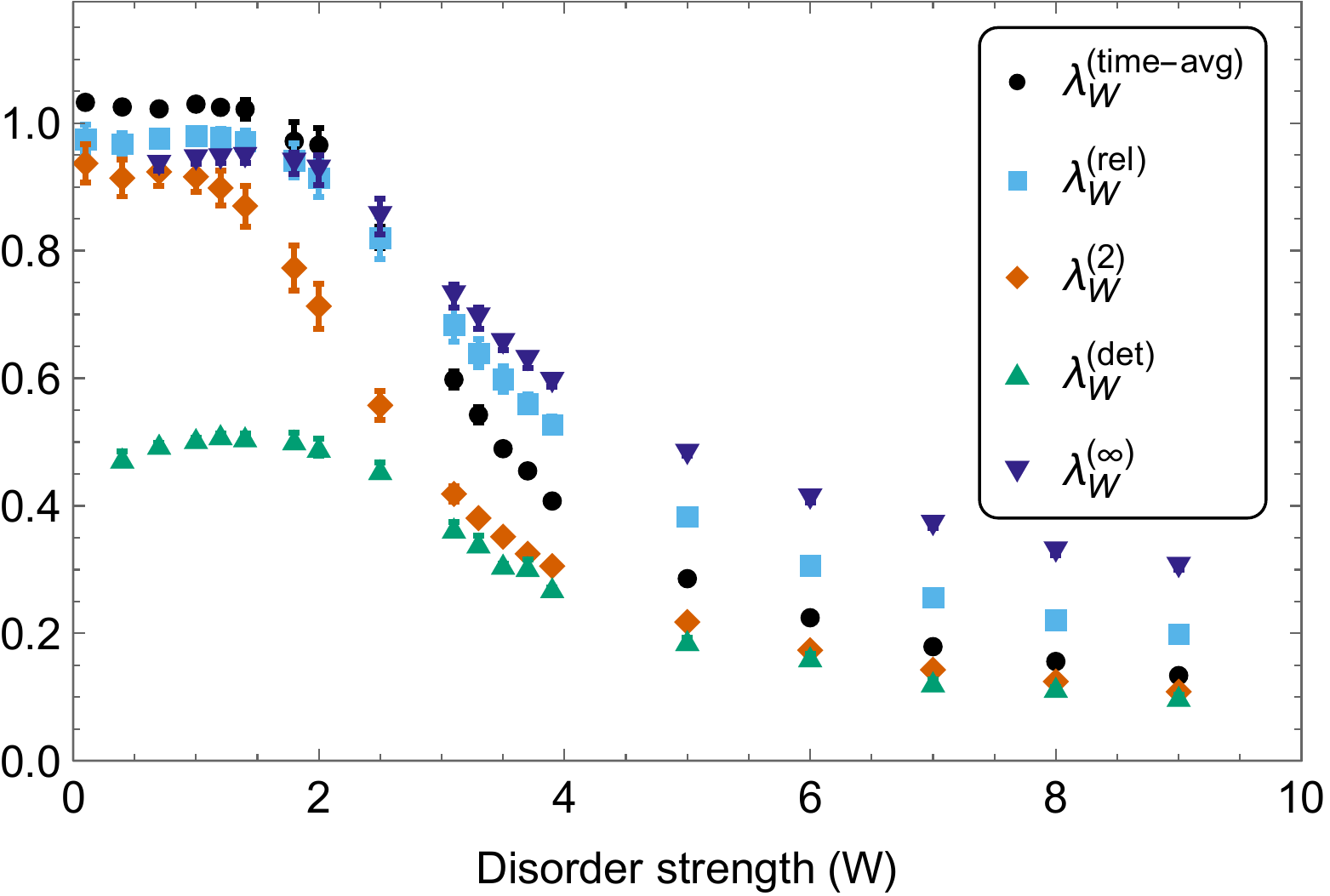}
\caption{Asymptotic behavior for the slope of the quantities: $\log_2 \left( 1 - \braket{ C_B^{(2)} \left( \mathcal V_W \right) } \right) = \log_2 \left( P_{\text{return}}  \right)$, $\braket{ C_B^{(\text{rel})} \left( \mathcal V_W \right) }$, $\log_2 \left( 1 - \braket{ f_B^{(\det)} \left( X_{ \mathcal V_W } \right)} \right)$, $\log_2 \left( 1 - \braket{ f_B^{(\text{time-avg})}\left( X_{ \mathcal V_W } \right)} \right)$, and $\log_2 \left( 1 - \braket{ f_B^{(\infty)}\left( X_{ \mathcal V_W } \right)} \right)$  for large $L$ as a function of the disorder strength $W$ for the Hamiltonian $H_{\mathrm{XXX}}$ at $h_x=0.3$. The slope was extracted for sizes $L = 4, 6, \cdots, 14$, with sample sizes $\num{20000}, \num{20000}, \num{20000}, \num{8000}, \num{2000}, \num{800}$; except at the $W=3.7$, where the sample sizes were doubled. The error bars represent the standard error of the linear fit (see the Appendix~\ref{App_numerics} for more details). Entropy has logarithm with base 2.}
\label{Fig_MBL}
\end{figure}

Similar to the Anderson Hamiltonian, we first study the behavior of the CGP $\braket{ C_B^{(2)} \left( \mathcal V_W \right) }$ and $\braket{ C_B^{(\rel)} \left( \mathcal V_W \right) }$, where $\mathcal V_W$ is the intertwiner between the Hamiltonian and the configuration space basis, which here is taken to be the product $\bigotimes_i \sigma_i^z$ eigenbasis. We find a distinct behavior of the quantities $\braket{ C_B^{(2)} \left( \mathcal V_W \right) }$ and $\braket{ C_B^{(\rel)} \left( \mathcal V_W \right) }$ between the ergodic and MBL phases of the model, as also hinted from the numerical results in \cite{PhysRevLett.110.260601,De_Luca_2013,PhysRevB.92.180202,PhysRevB.92.014208,TH_Santos_Extended_2017,PhysRevB.96.014202}.

For sizes up to $L = 14$, none of the studied CGP quantities seems to reach a constant asymptotic value as in the Anderson case. Nonetheless,
the (average) return probability $P_{\text{return}}$
as a function of the number of spins $L$ is consistent with an exponential decay 
\begin{align} \label{Eq_rate_2}
    P_{\text{return}} \propto  2 ^{- \lambda^{(2)}_W L} = d^{- \lambda^{(2)}_W} \,\;.
\end{align}
The extrapolated rates $\lambda^{(2)}_W$, plotted in \autoref{Fig_MBL}, are close to 1 in the ergodic phase, while they drop at the transition point, obtaining a significantly reduced value at the MBL phase. On the other hand, the relative entropy CGP is consistent with a scaling
\begin{align}
    \braket{ C_B^{(\rel)} \left( \mathcal V_W \right) } =  \lambda^{(\rel)} _W L + \text{const}  \,\;,
\end{align}
with a rate $\lambda^{(\rel)} _W$ that is close to 1 for small disorder and drops significantly in the MBL phase.

We now turn to the generalized CGP measures $f_B^{(\det)}$ and $f_B^{(\infty)}$, whose behavior is also consistent with a scaling
\begin{align} \label{Eq_rate_det}
    1 - \braket{ f_B^{(x)} \left( X_{ \mathcal V_W } \right)} \propto  2 ^{- \lambda^{(x)}_W L} = d^{- \lambda^{(x)}_W} \,\;,
\end{align}
and a rate $\lambda_W^{(x)}$ showing distinct behavior in the different phases  ($x = \det$ or $x = \infty$). In \autoref{Sec_App_comparison_lambdas} of the Appendix we show that
\begin{align} \label{Eq_comparison_lambdas}
    \lambda_W^{(\det)} \ge \frac{1}{2} \lambda_W^{(2)} \,\;,
\end{align}
which is saturated for small disorder values and is verified by the observed numerical simulations. Exponential decay is also encountered for the time-average $\braket{ f_B^{(\text{time-avg})}\left( X_{ \mathcal V_W } \right)}$, also plotted in \autoref{Fig_MBL}. Notice that, although the latter fails to be a generalized CGP measure, it can still be employed to detect the transition. For more details about the numerical simulations see the Appendix~\ref{App_numerics}.

For what regards $\braket{ C_B^{(2)} \left( \mathcal V_W \right) }$, we can obtain its behaviour in the limit of infinite disorder and in the ergodic phase. First, we write the  \textit{return probability} $P_{\text{return}} \coloneqq 1 - \mathcal P_{B'} = 1 - \braket{ C_B^{(2)} \left( \mathcal V_W \right)}$ as
\begin{align}
    P_{\text{return}} = \frac{1}{d}\sum_{i=1}^d \langle i | \mathcal{E} ( | i\rangle \langle i | ) |i \rangle \,\;,
\end{align}
where $\mathcal{E} \coloneqq \langle \overline{{\cal U}_t} \rangle$ is the average of the quantum (superoperator) evolution  
$\mathcal{U}_t (\cdot) = e^{-itH_{\mathrm{XXX}}} (\cdot) e^{itH_{\mathrm{XXX}}}$
and $ |i\rangle$ denotes the product $\bigotimes_i \sigma_i^z$ (Ising) basis. 

In the limit of strong disorder $\mathcal{E} ( | i\rangle \langle i | )= | i\rangle \langle i | $ so that $P_{\text{return}} =1$. 
Instead, in the ergodic phase, or more precisely assuming that the operators $| i\rangle \langle i |$ are shell-ergodic (see \cite{LCV_Liu2019}) one obtains $\mathcal{E} ( | i\rangle \langle i | )= \rho_{eq} $ where $\rho_{eq}$ is the (microcanonical) equilibrium state (see \cite{LCV_Liu2019} for more details). This implies that $P_{\text{return}} = (1/d) \mathrm{Tr} (\rho_{eq} ) = 1/d$ thus converging to zero in the thermodynamic limit. One reaches the same conclusion ($P_{\text{return}} \to 0$ albeit possibly with a different speed) if shell ergodicity holds not for all but for sufficiently many basis projectors $|i\rangle \langle i|$.

Finally, we comment on our findings from the typicality point of view. In \cite{coherence_1} it was shown that, if the intertwiner is chosen at random from the unitary group $V \in U(d)$ according to the Haar measure, then $C^{(2)}_B$ is concentrated near its mean
\begin{align}
\braket{C^{(2)}_B (\mathcal V)} _{\text{Haar}}= 1 - \frac{2}{d+1}
\end{align}
($\braket{\cdot}_{\text{Haar}}$ denotes the Haar average over the intertwiner), with overwhelming probability for large Hilbert space dimension $d$ (here $B$ can be any fixed basis). In other words, the typical rate for $P_{\text{return}}$ is $\lambda^{(2)}_{\text{Haar}} \approx 1$. From that perspective, an ergodic behavior is the typical one, while the MBL case can be seen as a highly atypical outlier.









\section{Differential geometry of coherence-generating power and MBL} \label{sec_diffgeom}

In this section we study the behavior of the CGP $C^{(2)}_{B} (\delta \mathcal V)$ when the intertwiner $\delta \mathcal  V$ connects two bases that are ``infinitesimally close'' to each other. This results in a differential-geometric construction whose central quantity is a Riemannian metric. As we will show, the resulting metric
\begin{inparaenum}[(i)]
\item is directly connected to the dynamical conductivity, which is a quantity of experimental relevance, and
\item behaves distinctly in the MBL and ergodic phases.
\end{inparaenum}
The detailed mathematical structure is presented in Appendix \ref{sec_diffgeomsupp}.


Consider a complete orthonormal family of states $\left\{\ket{\phi_i(\lambda)}\right\}_{i=1}^{d}$, parametrized by a set of parameters $\lambda$. This is the relevant case, for instance, when one studies the eigenvectors associated with a family of Hamiltonians $H(\lambda)$. The infinitesimal adiabatic intertwiner $\delta{\mathcal V}$
is a unitary map defined by
\begin{align}
\delta{\mathcal V} \left( \ket{\phi_i(\lambda)}\! \bra{\phi_i(\lambda)} \right)
=\ket{\phi_i(\lambda+ d\lambda)} \! \bra{\phi_i(\lambda+ d\lambda)} \,\;,
\end{align}
where $H(\lambda)\ket{\phi_i (\lambda)} = E_i(\lambda)\ket{\phi_i(\lambda)}$.


It can be shown that the CGP of
$\delta{\mathcal V}$ has the form $C^{(2)}_B(\delta{\mathcal V}) = 2 g d\lambda^2$, where $g$ is a metric given by
\begin{subequations} \label{Eq_metric}
\begin{align}
g & \coloneqq \frac{1}{d}
\sum_{i=1}^{d} \chi_i \,\;, \\
\chi_i &\coloneqq  \Braket{\frac{\partial \phi_i}{\partial \lambda}| \frac{\partial \phi_i}{\partial \lambda}} - \Braket{\phi_i| \frac{\partial \phi_i}{\partial \lambda}} \Braket{ \frac{\partial \phi_i}{\phi_i} | \phi_i } \,\;,
\label{diff-dist}
\end{align}
\end{subequations}
i.e., it is itself a mean of the metrics $\chi_i$ which are associated to the vectors $\ket{\phi_i}$. When the latter are Hamiltonian eigenstates, $\chi_i$ are known as \textit{fidelity susceptibilities} \cite{PhysRevE.74.031123,PhysRevLett.99.095701,PhysRevE.76.022101} and the ground state susceptibility $\chi_0$  plays a key role in the differential geometric approach to quantum phase transitions \cite{PhysRevLett.99.100603}.


In order to connect with quantities of experimental relevance, let us now consider the thermal analog of the metric $g$. We denote $g_T=\sum_{i} p_{i}\chi_{i} $, where $p_{i}=\exp\left(-E_{i}/T\right)/Z$ are the thermal weights and $Z$ denotes the partition function. The quantity $g_T$, defined in \cite{KolodrubetzGeometry2017} as a
generalization of the fidelity susceptibility at finite temperature ($g = g_{T=\infty}$), can be thought of as the metric associated with the thermal analogue of the CGP $C(\mathcal V, c_B^{(2)},\mu_{T})$, where the measure $\mu_T$ weights the Hamiltonian eigenstates with the associated Gibbs weights. The quantity $g_T$ can be expressed via the (imaginary part of the) dynamical susceptibility $\chi_{VV}(\omega)$, where $V=\partial_{\lambda}H(\lambda)$. More precisely (see \cite{KolodrubetzGeometry2017}),
\begin{align}
g_T=\int_{0}^{\infty}\frac{d\omega}{\pi}\frac{\chi''_{VV}(\omega)}{\omega^{2}}\coth\left(\frac{\omega}{2T}\right).\label{eq:gT2}
\end{align}
The above formula is remarkable, as it demonstrates that the, apparently abstract, quantity $ C_B^{(2)}(\delta  \mathcal V)$ is simply connected with a quantity measurable in experimental setups \cite{PhysRevLett.104.147201,RevModPhys.87.855,hauke2016measuring}. We also note that, although Eq.~\eqref{eq:gT2} is not straightforwardly applicable in the infinite temperature limit, in this limit one obtains
\begin{equation}
g = g_{T=\infty}=\frac{1}{\pi}\int_{-\infty}^{\infty}\frac{\sigma_{VV}(\omega)}{\omega^{2}}d\omega \,\;,
\label{eq:gDC}
\end{equation}
where $\sigma_{VV}(\omega)$ is the high-temperature dynamical conductivity \footnote{The name dynamical conductivity comes from its use when $\chi_{VV}$
is the (charge) current-current correlation.} given by

\begin{align}
    \sigma_{VV}(\omega) & =\frac{2\pi}{d}\sum_{n\neq m}\left|V_{n,m}\right|^{2}\delta[\omega-(E_{m}-E_{n})].
\end{align}

In this case, the role of $g$ is played by the d.c.~dielectric polarizability \cite{PhysRevB.94.045126,Prelovsek2017MBL}.


The quantities $g_T$ and $g$ not only allow to make contact with experiments, but have also been studied in the context of thermalization and MBL. In particular, it is believed that $g\to\infty$ in the thermodynamic limit, both for the ergodic and the sub-diffusive phase. Instead, in the MBL phase $g\to\mathrm{constant<\infty}$ \cite{Prelovsek2017MBL}.
In the light of Eq.~\eqref{Eq_metric}, these results mean that the CGP of the adiabatic intertwiner between nearby Hamiltonians has distinctively different behaviors in the ergodic and in the MBL phases.

\section{Conclusions and future works} \label{sec_conclusions}

In this work we have brought together ideas from quantum information and geometry, on the one hand, and the physics of disordered systems on the other. We established a connection between the quantitative approach to coherence, originating from the perspective of quantum resource theories \cite{PhysRevLett.113.140401,RevModPhys.89.041003}, and localization \cite{anderson_absence_1958,basko_metal-insulator_2006,pal_many-body_2010,nandkishore_many-body_2015}.

More specifically, we studied the behavior over the ergodic, Anderson and many-body localized phases in terms of the scaling  properties of coherence averages that are associated to the intertwiner connecting the Hamiltonian eigenvectors with the configuration space basis.
Furthermore, we built an associated differential-geometric version for infinitesimal perturbations of the Hamiltonian, and showed that the resulting Riemannian metric can be mapped onto known physical quantities which have a sharply distinct  behavior in the ergodic and in the MBL phases.
 

Quantum chaos is often dubbed as the \textit{dynamical counterpart} of quantum localization and connections between the two have been used to elucidate the physics of chaotic systems~\cite{chaos_1,haake2010quantum}. Following this correspondence, we conjecture that the CGP can act as a signature of quantum chaos, for example, by identifying the so-called ``edge of chaos''~\cite{weinstein2002edgeofchaos}. Moreover, dynamical quantities like the survival probability~\cite{torres2019survival} and entangling power~\cite{zanardi2000entangling} (which is the direct analogue of CGP for entanglement) have been applied to the study of chaotic systems like the quantum kicked top~\cite{haake2010quantum}, which can be related to measures of CGP, as will be explored in a forthcoming article~\cite{preprint2019namit}.

Establishing further, more rigorous, theoretical as well as numerical grounds for this connection between the information-theoretic approach to quantum coherence and many-body theory provides a direction for future research. 

\begin{acknowledgments}

G.S.~acknowledges financial support from a University of Southern California ``Myronis'' fellowship. N.A.~acknowledges the HPC staff at USC for their assistance. L.C.V.~acknowledges partial support from the Air Force Research Laboratory award no.~FA8750-18-1-0041. P.Z.~acknowledges partial support from the NSF award PHY-1819189.
The research is based upon work (partially) supported by the Office of the Director of National Intelligence (ODNI), Intelligence Advanced Research Projects Activity (IARPA), via the U.S. Army Research Office contract W911NF-17-C-0050. The views and conclusions contained herein are those of the authors and should not be interpreted as necessarily representing the official policies or endorsements, either expressed or implied, of the ODNI, IARPA, or the U.S. Government. The U.S. Government is authorized to reproduce and distribute reprints for Governmental purposes notwithstanding any copyright annotation thereon.

\end{acknowledgments}

\appendix

\section{Proofs of Propositions} \label{Sec_app_proofs}

\cgpformulas*

\begin{proof}
\textbf{(i)} We follow a procedure similar to the one in Ref.~\cite{coherence_1}. We make use of the Hilbert-Schmidt inner product $\braket{A, B} \coloneqq \Tr \left( A^\dagger B \right)$ over the space $\mathcal B(\mathcal H)$ of bounded linear operators over $\mathcal H$. Starting from Eq.~\eqref{Eq_CGP_general} with $c_B = c_B^{(2)}$, we get
\begin{align*}
C^{(2)}_B \left( \mathcal U \right) & = \frac{1}{d} \sum_i \left\| \left( \mathcal I - \mathcal D_B \right) \mathcal U \Pi_i  \right\|_2^2 \\
& =  \frac{1}{d} \sum_i  \braket{\left( \mathcal I - \mathcal D_B \right) \mathcal U \Pi_i,\left( \mathcal I - \mathcal D_B \right) \mathcal U \Pi_i}  \\
& =  \frac{1}{d} \sum_i \left( \left\| \mathcal U \Pi_i  \right\|_2^2 - \left\|  \mathcal D_B \mathcal U \Pi_i  \right\|_2^2 \right) \,\;,
\end{align*}
where we have used the fact that the dephasing superoperator $\mathcal D_B \in \mathcal B (\mathcal B (\mathcal H))$ is self-adjoint $\mathcal D_B^\dagger = \mathcal D_B$ with respect to the Hilbert-Schmidt inner product, as well as a projection $\mathcal D_B^2 = \mathcal D_B$. Unitary invariance of the 2-norm implies $\left\| \mathcal U \Pi_i  \right\|_2^2 = 1$. Using the definition Eq.~\eqref{Eq_dephasing_def}, a straightforward calculation gives
\begin{align} \label{Eq_C2B_alt}
C^{(2)}_B \left( \mathcal U \right) = 1 -  \frac{1}{d} \sum_{ij} \left(X_{\mathcal U} \right)^2 _{ji}
\end{align}
which reduces to the claimed result.
\\ \\
\textbf{(ii)} Let us denote the Shannon entropy of a probability vector as $H(\bs p) \coloneqq - \sum_i p_i \log (p_i) $. Since $S(\mathcal U \Pi_i) = S(\Pi_i) =  0$, Eq.~\eqref{Eq_CGP_general} 
with $c_B = c_B^{(\rel)}$ gives
\begin{align*}
C^{(\rel)}_B \left( \mathcal U \right) & = \frac{1}{d} \sum_i S(\mathcal D_B \mathcal U \Pi_i) \\
 & =  \frac{1}{d} \sum_i S \left( \sum_j \left( X_{\mathcal U} \right)_{ji} \Pi_j \right) \\
  & = \frac{1}{d} \sum_i H \left(\left\{ \left( X_{\mathcal U} \right)_{ji} \right\}_j \right) \\
  & = H(X_{\mathcal U}) \,\;.
\end{align*}
\end{proof}

\cgpproperties*

\begin{proof}

We first show that, for a fixed coherence measure $c_B$, the quantity $C \left( \mathcal U , c_B ,\mu _{\unif} \right)$ (explicitly given in Eq.~\eqref{Eq_CGP_general}) can be expressed as a function of $X_{\mathcal U}$. This implies that the phases of $U$ (considered as a matrix in the $B = \{ \Pi_i \}_i = \{ \ket{\phi_i} \!\bra{\phi_i} \}_i$ basis, where $\mathcal U(X) = U X U^\dagger$) are irrelevant.

Consider a pure state $\ket{\psi}$. The value of $c_B\left(  \ket{\psi} \! \bra{\psi} \right)$ can only depend on the modulus of the coefficients $\left\{ \left| \braket{\phi_i | \psi } \right|  \right\}_{i=1}^d$. This follows from the fact that the unitary transformations $\mathcal V (\rho)= V \rho V^\dagger $, such that $V \ket{\psi}$ alters the phases or permutes the coefficients $\{  \braket{\phi_i | \psi }   \}_{i=1}^d$, form a subgroup of the Incoherent Operations. Hence all coherence monotones should maintain a constant value over a group orbit.
As a result, $c_B(\mathcal U (\Pi_j))$ can be expressed as a function of $\{ \left( X_{\mathcal U} \right)_{ij} \}_{i=1}^d$ (recall $\left( X_{\mathcal U} \right)_{ij} = \left| \braket{\phi_i | U | \phi_j} \right|^2$). Hence, also $C \left( \mathcal U , c_B ,\mu _{\unif} \right)$ can be expressed as a function of the whole matrix $X_{\mathcal U}$ (in fact, an additive one over the columns).


Property (i) follows directly from the fact that coherence measures vanish over incoherent states. For property (ii), invariance under pre-processing by a permutation $\Pi'$ holds since the averaging over the states is uniform. Invariance under post-processing by $\Pi$ holds since unitary transformations that permute the elements of $B$ belong to Incoherent Operators.

We now prove property (iii). First notice that, since the value of $c_B\left( \ket{\psi} \! \bra{\psi} \right)$ can only depend on the moduli of the coefficients $\left\{ \left| \braket{\phi_i | \psi } \right|  \right\}_{i=1}^d$, the function $f_B(X)$ is in fact well-defined over all bistochastic matrices (and not just unistochastic \footnote{A bistochastic matrix $M_{ij}$ is called unistochastic if there exists a unitary matrix $U_{ij}$ such that $M_{ij} = \left|  U_{ij} \right|^2 $ (see \cite{bengtsson2017geometry} for more details).} ones).


Consider a collection of pure states $\{ \ket{\psi_j}\!\bra{\psi_j}  \}_{j=1}^d$ such that
\begin{align}
    \ket{\psi_j} = \sum_i \sqrt{(MX_{\mathcal U})_{ij}} \ket{\phi_i} \,\;.
\end{align}
Then, one has that
\begin{align} \label{Eq_condition_psi}
\Tr \left( \Pi_i \ket{\psi_j} \! \bra{\psi_j} \right) = \sum_k M_{ik}  \Tr \left( \Pi_k \, \mathcal U(\Pi_j) \right) \quad \forall i,j \,\;.
\end{align}

To prove the desired inequality of (iii), we will show that $c_B \left( \ket{\psi_j}\!\bra{\psi_j} \right) \ge c_B (\mathcal U(\Pi_j)) $ $\forall j$. Indeed, the previous holds true for all coherence measures $c_B$ if for every $j$ there exists an Incoherent Operator $\mathcal E$ such that $\mathcal E \left( \ket{\psi_j}\!\bra{\psi_j}  \right) = \mathcal U(\Pi_j)$. The last is guaranteed (in fact, within Strictly Incoherent Operators) by the main result of \cite{PhysRevA.91.052120} which can be applied since, by the bistochasticity of $M$, Eq.~\eqref{Eq_condition_psi} implies that $ \mathcal D_{B} \left( \mathcal U (\Pi_j) \right) \succ \mathcal D_{B} \left( \ket{\psi_j}\!\bra{\psi_j} \right)$.

\end{proof}

\cgpmajorization*

\begin{proof}
The first part follows by generalizing the proof of part (iii) of \autoref{Prop_CGP_properties}. One can directly extend the construction by considering two sets of pure states $\{ \ket{\psi_j} \! \bra{\psi_j} \}_{j=1}^d$ and $\{ \ket{\psi'_j} \! \bra{\psi'_j} \}_{j=1}^d$ such that
\begin{subequations} \label{eq_app_psi}
\begin{align}
    \ket{\psi_j} &= \sum_i \sqrt{Y_{ij}} \ket{\phi_i} \\
    \ket{\psi'_j} &= \sum_i \sqrt{X_{ij}} \ket{\phi_i} \,\;.
\end{align}
\end{subequations}
Then the convertibility argument $\ket{\psi_j} \! \bra{\psi_j}  \mapsto \ket{\psi'_j} \! \bra{\psi'_j}$ via strictly incoherent operations applies due to the majorization condition, giving the desired result.

For the converse, we will first show that the functions over pure states $c_B(\ket{\psi}\!\bra{\psi}) = \sum_i \phi \left( \Tr \left( \Pi_i \ket{\psi}\!\bra{\psi}  \right) \right)$ are monotones, where $\phi$ is any continuous concave function. Indeed, from the main result of \cite{PhysRevA.91.052120}, a conversion $\ket{\psi} \! \bra{\psi}  \mapsto \ket{\psi'} \! \bra{\psi'}$ via Strictly Incoherent Operations is possible if and only if $\mathcal D_B(\ket{\psi'} \! \bra{\psi'}) \succ \mathcal D_B(\ket{\psi} \! \bra{\psi})$ \footnote{In fact, the majorization condition is only sufficient for convertibility. It becomes also necessary if an additional condition about the rank of the dephased states is satisfied (see \cite{PhysRevA.91.052120} for more details). Nevertheless, if one considers convertibility with some error (arbitrarily small), which is the relevant notion in all physical scenarios, the rank conditions becomes irrelevant.}. However, a standard result by Hardy, Littlewood and P\'{o}lya states that for two probability vectors it holds that $\bs p \succ \bs q$ if and only if $\sum_i \phi(p_i) \le \sum_i \phi(q_i)$ for all continuous concave $\phi$ \cite{marshall_inequalities:_2011}. As a result, $\mathcal D_B(\ket{\psi'} \! \bra{\psi'}) \succ \mathcal D_B(\ket{\psi} \! \bra{\psi})$ is equivalent to $ \sum_i \phi \left( \Tr \left( \Pi_i \ket{\psi'}\!\bra{\psi'}  \right) \right) \le \sum_i \phi \left( \Tr \left( \Pi_i \ket{\psi}\!\bra{\psi}  \right) \right)$, i.e., the aforementioned functions $c_B$ are monotones over pure states.

By assumption, the functions $f_B$ arise from continuous coherence monotones over pure states. From the statement in the previous paragraph it then follows that, in fact, all $f_B(X) = \sum_{ij} \phi \left(X_{ij} \right)$ for continuous concave $\phi$ are such functions. Hence, $ \sum_{ij} \phi \left(X_{ij} \right) \le \sum_{ij} \phi \left(Y_{ij} \right)$. Finally, the aforementioned result by Hardy, Littlewood and P\'{o}lya  \cite{marshall_inequalities:_2011} 
in the context of column majorization implies  $X \succ ^c Y$.


\end{proof}

\cgpmeaning*

\begin{proof}
\textbf{(i)} The key observation is that the dephasing superoperator $\mathcal D_B$
arises as the (infinite) time average of the Schr\"odinger evolution $\mathcal{U}_{t}(\cdot)=e^{-itH } (\cdot) e^{itH}$, namely $\overline{\mathcal U_t} = \mathcal D_B$. Using the Hilbert-Schmidt inner product over $\mathcal B(\mathcal H)$ (see proof of \autoref{Prop_CGP_formulas}) and setting $\Pi_\psi = \ket{\psi}\!\bra{\psi}$,  we get
\begin{align*}
\mathcal P_\psi &= 1 - \overline{ \Tr \left( \Pi_\psi \,  \mathcal U_t (\Pi_\psi)\right)} = 1 -  \Tr \left( \Pi_\psi \,  \mathcal D_B (\Pi_\psi)\right) \\
& = 1 - \braket{\Pi_\psi , \mathcal D_B \Pi_\psi} =  1 - \braket{\mathcal D_B\Pi_\psi , \mathcal D_B \Pi_\psi} \\
& = \braket{(\mathcal I - \mathcal D_B) \Pi_\psi , (\mathcal I - \mathcal D_B) \Pi_\psi} \\
& = \left\| (\mathcal I - \mathcal D_B) \Pi_\psi \right\|_2^2 = c_{B}^{(2)} (\Pi_\psi) \,\;.
\end{align*}
\textbf{(ii)} The first equality of Eq.~\eqref{Eq_return_prob_CGP_2} follows by combing part $(i)$ of the Proposition with Eq.~\eqref{Eq_CGP_general}. For the second equality, from the unitary invariance of the 2-norm, we have
\begin{align*}
C_{B}^{(2)}(\mathcal V) &=  \frac{1}{d} \sum_i \left\|(\mathcal I - \mathcal D_B ) \ket{i}\!\bra{i} \right\|_2^2 \\
& =  \frac{1}{d} \sum_i \left\|\mathcal V^\dagger (\mathcal I - \mathcal D_{B'} ) \mathcal V \left( \ket{i} \! \bra{i} \right) \right\|_2^2 \\
& =   \frac{1}{d} \sum_i \left\| (\mathcal I - \mathcal D_{B'} ) \mathcal V \left( \ket{i} \! \bra{i} \right) \right\|_2^2 \\
& = C_{B'}^{(2)}(\mathcal V) \,\;. 
\end{align*}
However, notice that $X_{\mathcal V^\dagger} = X_{\mathcal V}^T$ which from Eq.~\eqref{Eq_2CGP_Tr} implies $C_{B'}^{(2)}(\mathcal V ^\dagger) = C_{B'}^{(2)}(\mathcal V )$.
\end{proof}

\section{Time-averaged CGP} \label{Sec_App_time_avg}

In this section we study the time-average of the CGP $ \overline{C_{B'}^{(2)} (\mathcal U_t)}$, where $\mathcal U_t (X) = \exp(- i H t) X \exp (i H t)$ is the time evolution operator. For the following, we will assume that the Hamiltonian $H = \sum_i E_i \ket{\phi_i} \! \bra{\phi_i}$ satisfies the \textit{non-resonance condition}, i.e., its energy gaps are non-degenerate. Under this assumption, we will show that
\begin{align} \label{Eq_app_CGP_NRC}
    \overline{C_{B'}^{(2)} (\mathcal U_t)} = 
    1 - \frac{2}{d} \sum_{ij} \braket{X^c_i , X^c_j}^2 + \frac{1}{d} \sum_{i} \braket{X^c_i , X^c_i}^2
\end{align}
where $V = \sum_{i} \ket{i} \! \bra{\phi_i}$ is the intertwiner between $B = \{ \Pi_i \coloneqq \ket{\phi_i} \! \bra{\phi_i} \}_i$ and $B' = \{ P_i \coloneqq \ket{i} \! \bra{i} \}_i$.

We have,
\begin{align*}
    \overline{C_{B'}^{(2)} (\mathcal U_t)} & =  1 - \frac{1}{d} \sum_i  \overline{ \braket{\mathcal D_{B'}  \,\mathcal U_t \left( P_i \right) , \mathcal D_{B'} \, \mathcal U_t  \left( P_i \right) } } \\
    & =  1 - \frac{1}{d} \sum_i  \overline{  \braket{ P_i  ,  \mathcal U^\dagger_t \mathcal   D_{B'} \, \mathcal U_t  \left( P_i \right) } } \\
        & =1 - \frac{1}{d} \sum_{ijkk'll'} \Big[ \,  \overline{ \exp[i(E_k - E_{k'} + E_l - E_{l'})] } \\ & \quad \qquad \qquad  \cdot  \Tr \left(  P_i   \Pi_k  P_j   \Pi_{k'}  P_i   \Pi_l   P_j   \Pi_{l'}   \right) \Big] \,\;.
\end{align*}
The non-resonance condition implies that
\begin{multline*}
     \overline{ \exp[i(E_k - E_{k'} + E_l - E_{l'})] } = \\ = \delta_{kk'} \delta_{ll'} + \delta_{kl'}\delta_{k'l} - \delta_{kk'} \delta_{k'l} \delta_{ll'} \,\;.
\end{multline*}
A straightforward calculation gives
\begin{align*}
    \overline{C_{B'}^{(2)} (\mathcal U_t)} & = 1 - \frac{1}{d} \Big( 2 \sum_{ijkl} (X_{\mathcal V})_{ki} (X_{\mathcal V})_{kj}(X_{\mathcal V})_{li}(X_{\mathcal V})_{lj} \\
    & \qquad \qquad - \sum_{ijk} (X_{\mathcal V})_{ki}^2 (X_{\mathcal V})_{kj}^2 \Big)
\end{align*}
which reduces to Eq.~\eqref{Eq_app_CGP_NRC}.

An easy calculation for a single qubit reveals that $f_B^{(\text{time-avg})} (X)$ is not a generelized CGP measure, since its maximum value is not attained over the transition matrix with elements $X_{ij} = 1/2$ \footnote{In light of the connection between infinite time-average and dephasing, this relates to the results in \cite{coherence_3}, where the question of CGP for dephasing evolutions was investigated.}.

\section{Inverse participation ratio, effective dimension, and Loschmidt echo}
\label{Sec_App_thermalization}

For a non-degenerate Hamiltonian $H = \sum_i E_i \ket{\phi_i} \bra{\phi_i}$, the escape probability $\mathcal P_\psi$ is directly connected with the second Participation Ratio of $\ket{\psi}$ over the Hamiltonian eigenbasis $\PR_2  \coloneqq \sum_i \left| \braket{\phi_i | \psi} \right|^4 $ as $\mathcal P_\psi = 1 - \PR_2$.

The second Participation ratio, in turn, is intimately connected to two other quantities of physical interest in the study of equilibration and thermalization, namely the \textit{effective dimension} and the \textit{Loschmidt echo}~\cite{PhysRevE.79.061103,PhysRevLett.101.190403}. The effective dimension of a quantum state is defined as its inverse purity,
\begin{align}
d^{\mathrm{eff}}(\rho) = \frac{1}{\mathrm{Tr}[\rho^2]},
\end{align}
which intuitively corresponds to the number of pure states that contribute to the (in general) mixed state $\rho$. Given a non-degenerate Hamiltonian, it is easy to show that the effective dimension of the (infinite) time-averaged state is equal to the inverse of the second Participation ratio, that is,
\begin{align}
d^{\mathrm{eff}}(\overline{\rho})=\frac{1}{\operatorname{Tr}\left(\overline{\rho}^{2}\right)}=\frac{1}{\sum_{i}\left|\left\langle\phi_{i} | \psi\right\rangle\right|^{4}} = \frac{1}{\textnormal{PR}_2},
\end{align}
where $\rho = \ket{\psi}\!\bra{\psi}$.

Recall that the Loschmidt echo is defined as the overlap between the initial state $\ket{\psi}$ and the state after time $t$,
\begin{align}
\mathcal{L}_t \coloneqq \left| \braket{\psi | e^{-i H t} | \psi} \right|^2,
\end{align}
the infinite time-average of which can be identified with the \textit{return probability} of the state $\ket{\psi}$. Then, in the non-degenerate case, the time-averaged Loschmidt echo is related to the second Participation ratio and the effective dimension as
\begin{align}
\overline{\mathcal{L}_{t}}= \textnormal{PR}_2 = \frac{1}{d^{\mathrm{eff}}(\overline{\rho})}.
\end{align}

For a more detailed exposition, see \cite{gogolin2016equilibration}.

\section{CGP in the Anderson model for the degenerate case $W = 0$} \label{Sec_App_degenerate}

The spectrum of Anderson Hamiltonian Eq.~\eqref{Eq_Anderson_Hamiltonian} for the disorder-free case is degenerate, hence the intertwiner $\mathcal V _{W = 0}$ between the position and Hamiltonian eigenbases is not uniquely defined. Nevertheless, we show here that the behavior of the quantities $  C^{(2)}_{B}(\mathcal V_{W=0}) $ and $  C^{(\rel)}_{B}(\mathcal V_{W=0}) $ in the thermodynamic limit is independent of the specific choice of the Hamiltonian eigenbasis, namely $  C^{(2)}_{B}(\mathcal V_{W=0}) \to 1$ while $  C^{(\rel)}_{B}(\mathcal V_{W=0}) \sim \log  (L)$ for $L \to \infty$.

The spectrum of the Hamiltonian is $\left\{ 2 \cos \left( \frac{2 \pi j}{L} \right) \right\}_{j=0}^{L-1}$, hence there are $n_L$ distinct two-dimensional degenerate subspaces, where
$n_L = (L-2)/2$ for $L$ even and $n_L = (L-1)/2$ for $L$ odd. Invoking the Fourier eigenbasis 
\begin{align}
\ket{\phi_k} = \frac{1}{\sqrt{L}}\sum_{j=0}^{L-1} \exp \left( - i \frac{2 \pi j k }{L}  \right) \ket{j}  \label{Eq_Fourier_basis}
\end{align}
as reference, the general eigenbasis of $H_{W=0}$ 
may differ from basis \eqref{Eq_Fourier_basis} as
\begin{subequations}
\begin{align}
\ket{\phi'_k} &= e^{i \gamma_k} \left( e^{i \alpha_k} \cos(\theta_k) \ket{\phi_k} + e^{i \beta_k} \sin(\theta_k) \ket{\phi_{L-k}}  \right) \\
\ket{\phi'_{L-k}} &= e^{i \gamma_k} \left( - e^{i \beta_k} \sin(\theta_k) \ket{\phi_k} + e^{- i \alpha_k} \cos(\theta_k) \ket{\phi_{L-k}}  \right) 
\end{align}
\end{subequations}
for $k = 1,\dots,n_L $, where the angles $\left\{ \alpha_k,\beta_k,\gamma_k,\theta_k  \right\}$ specify the (unitary) transformation within the $k^{\text{th}}$ two-fold degenerate subspace.

A straightforward calculation gives
\begin{multline}
\left| \braket{l | \phi'_k}  \right|^2 = \left| \braket{l | \phi'_{L-k}}  \right|^2 = \\ =  \frac{1}{L} \left[ 1 + \cos \left( \frac{2 (L-2k) l \pi}{L} + \alpha_k - \beta_k \right) \sin(2 \theta_k) \right] \,\;. \label{Eq_modified_X}
\end{multline}
from which one can directly see that the possible Hamiltonian eigenbases differ in the sum $\sum_{i,j} \left(X_{\mathcal U} \right)^2 _{ji}$ at most of an order 1 term. Hence, from Eq.~\eqref{Eq_C2B_alt} it follows that any such contribution vanishes at the thermodynamic limit, yielding $  C^{(2)}_{B}(\mathcal V_{W=0})  \to  1$.

For $ C^{(\rel)}_{B}(\mathcal V_{W=0}) $, we first invoke the standard inequality between the Shannon entropy and the purity $H(\left\{ p_i \right\}) \ge  -\log  \left( \sum_i p_i^2 \right)$ (following from the monotonicity of the R\'{e}nyi entropies \cite{cover2012elements}). By the use of Eq.~\eqref{Eq_modified_X}, the purity of the probability distribution $\left\{ \left| \braket{l | \phi'_k}  \right|^2  \right\}_{l=1}^L$ is
\begin{align*}
\sum_{l=1}^L  \left| \braket{l | \phi'_k}  \right|^4 = \frac{2+\sin^2(2 \theta_k)}{2L} \,\;,
\end{align*}
therefore the previous inequality implies
\begin{align*}
H \left( \left\{ \left| \braket{l | \phi'_k}  \right|^2 \right\}_l \right) \ge \log  L - \log  \left( \frac{2+\sin^2(2 \theta_k)}{2}  \right) \,\;.
\end{align*}
Finally, this implies by Eq.~\eqref{Eq_relCGP_H} that $  C^{(\rel)}_{B}(\mathcal V_{W=0})$ diverges logarithmically with $L$ for any choice of the Hamiltonian eigenbasis.

\section{Derivation of Eqs.~\eqref{Eq_Heuristic}} \label{Sec_App_derivation_tanh}

In this section we show how using the ansatz $\left( X_{\mathcal V_{W}} \right)_{ji} = c_j \exp \left( -|i - \alpha_j| / \xi_j  \right)$, one can derive Eqs.~\eqref{Eq_Heuristic}.

Assuming periodic boundary conditions as in the main text, and since $\sum_i \left( X_{\mathcal V_{W}} \right)_{ji} = 1$, the coefficients $c_j$ can be expressed for $L\gg 1$ as
\begin{align*}
(c_j)^{-1} \approx 2 \sum_{x = 0}^{\infty} e^{ -x / \xi_j } - 1
\end{align*}
therefore
\begin{align} \label{Eq_normalization_cj}
c_j = \tanh [ (2 \xi_j)^{-1}  ] \,\;.
\end{align}
From Eq.~\eqref{Eq_2CGP_Tr},
\begin{align*}
    C_B^{(2)} = 1 - \frac{1}{L} \sum_{ij} \left( X_{\mathcal V_{W}} \right)_{ij}^2 = 1 - \frac{1}{L} \sum_j \frac{\tanh^2[(2\xi_j)^{-1}]}{\tanh(\xi^{-1}_j)} \,\;,
\end{align*}
which is \eqref{Eq_Heuristic_1}.

Similarly, from Eq.~\eqref{Eq_relCGP_H} we have
\begin{align*}
H \left( X_{V_{W>0}} \right) & =  -\frac{1}{L} \sum_{i,j = 1} ^L c_j e^{- \left| i - \alpha_j \right| / \xi_j} \ln  \left[   c_j e^{- \left| i - \alpha_j \right| / \xi_j} \right] \\
& = -\frac{1}{L}  \sum_j \left( \ln  c_j - c_j \sum_{i} e^{- \left| i - \alpha_j \right| / \xi_j} \frac{ \left| i - \alpha_j \right| }{ \xi_j} \right) \,\;.
\end{align*}
The sum $\sum_i$ for $L \gg 1$ is
\begin{gather*}
 \sum_{i=1}^L e^{- \left| i - \alpha_j \right| / \xi_j} \frac{ \left| i - \alpha_j \right| }{ \xi_j}  \approx 2 \sum_{x=1}^\infty e^{- x / \xi_j} \frac{ x }{ \xi_j} \\
  = -\frac{2}{\xi_j} \frac{d}{d(\xi_j)^{-1}}\sum_{x=1}^\infty e^{- x / \xi_j} = 2\frac{e^{1/\xi_j}}{\left( e^{1/\xi_j} - 1 \right)^2 \xi_j} \,\;.
\end{gather*}
Using Eq.~\eqref{Eq_normalization_cj} together with the above, we get to the desired form \eqref{Eq_Heuristic_2}.



\section{Evaluation of Eq.~\eqref{Eq_Heuristic_1} for on-site energies following Cauchy distribution} \label{Sec_Cauchy}

\begin{figure}[t]
\centering
\includegraphics[width=0.45\textwidth]{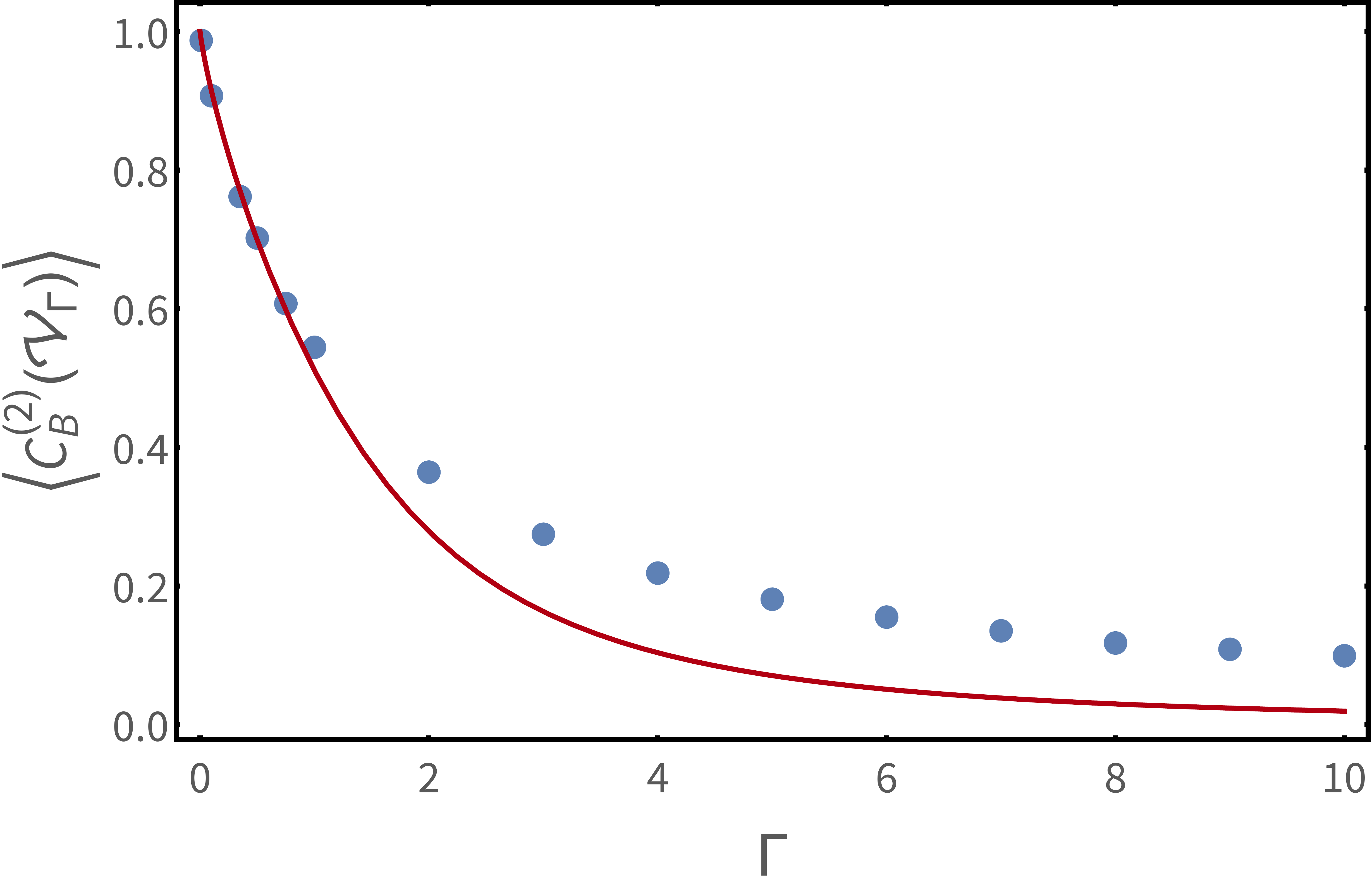}
\caption{Plot of the escape probability $\braket{C_B^{(2)}\left( \mathcal V_{\Gamma}  \right) }$ as a function of the disorder strength $\Gamma$ for the Lloyd model Hamiltonian $H_\Gamma$, as predicted analytically by the heuristic Eq.~\eqref{Eq_Heuristic_1} (solid line) and the numerical simulations (points). For the case of the numerical simulation, $L \to \infty$ is extrapolated by averaging over disorder for sizes up to $L = 2^{12}$. Standard deviations are within the point radius.}  \label{Fig_Lloyd}
\end{figure}

We consider the Hamiltonian \eqref{Eq_Anderson_Hamiltonian} with i.i.d. on-site energies $\epsilon_i$, distributed according to the Cauchy distribution
\begin{align}
f_\Gamma (\epsilon) = \frac{1}{\pi \Gamma} \left[ \frac{\Gamma^2}{\epsilon^2 + \Gamma^2}  \right] \,\;. 
\end{align}
The localization length $\xi(E,\Gamma)$ can be calculated by invoking the formula due to Thouless \cite{Thouless_1972}, which in our notation is
\begin{align}
\cosh\left( \frac{1}{2 \xi(E,\Gamma)} \right) = \frac{\sqrt{(2+E)^2+\Gamma^2} + \sqrt{(2-E)^2+\Gamma^2}}{4} \,. \label{Eq_Thouless}
\end{align}
To evaluate Eq.~\eqref{Eq_Heuristic_1} for this model in the thermodynamic limit, we transition to the continuum limit $\frac{1}{L}\sum_j g(E_j) \mapsto \int dE \rho_\Gamma(E) g(E)$. The density of states $\rho_\Gamma(E)$ can be obtained easily from the corresponding resolvent, calculated for the Lloyd model in \cite{Lloyd_1969}, and Eq.~\eqref{Eq_Thouless}. The resulting integral is numerically evaluated and yields the data plotted in \autoref{Fig_Lloyd}.

\section{Comparison of $C_B^{(2)}$ and $ f_B^{(\det)} $} \label{Sec_App_comparison_lambdas}

In this section, we will show that
\begin{align} \label{Eq_app_lambdas}
    P_{\text{return}} = 1 - C_B^{(2)} (\mathcal V) \ge \left(1 - f_B^{(\det)} (X_{\mathcal V}) \right)^2 \,\;.
\end{align}
Indeed,
\begin{gather*}
    1 - C_B^{(2)} (\mathcal V) = \frac{1}{d} \left\|  X_{\mathcal V} \right \|_2^2  = \frac{1}{d} \sum_i s_i^2 \\ 
     \ge \left( \frac{1}{d} \sum_i s_i \right) ^2  \ge \left[ \left( \prod_i s_i  \right)^{\frac{1}{d}}   \right]^2  =  \left( 1 - f_B^{(\det)} (X_{\mathcal V})  \right)^2 \,\;,
\end{gather*}
where $ s_i $ denotes the singular values of $X_{\mathcal V}$. The first equality follows from the convexity of the mean and the second one from the standard inequality between the arithmetic and geometric mean.

The inequality for the rates Eq.~\eqref{Eq_comparison_lambdas} follows by plugging into the inequality \eqref{Eq_app_lambdas} the forms \eqref{Eq_rate_2} and \eqref{Eq_rate_det}.

\section{Details of the numerical calculations for MBL}
\label{App_numerics}

\begin{figure}[t]
\subfloat[average escape probability (MBL)]{%
  \includegraphics[clip,width=\columnwidth]{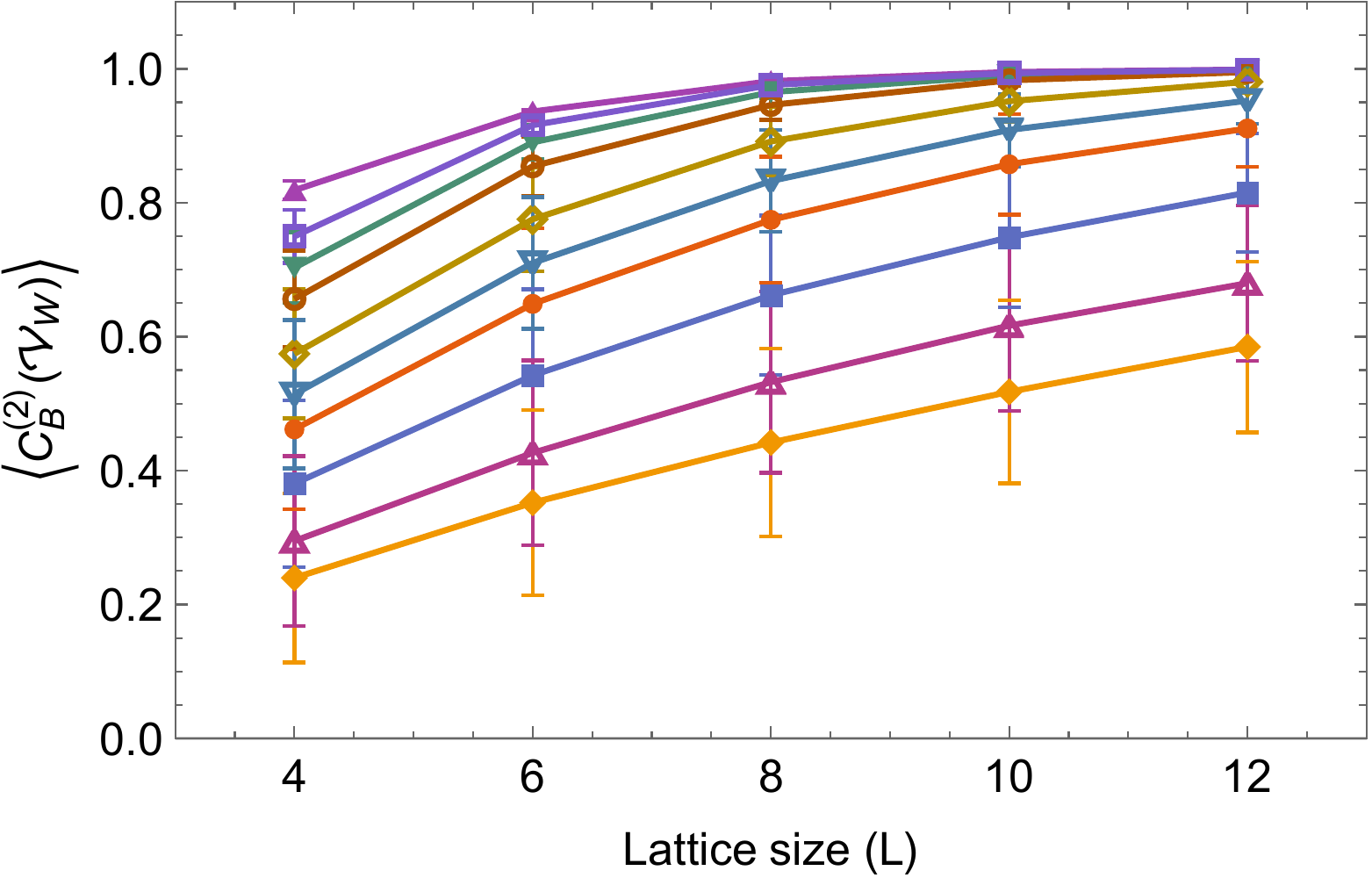}%
}

\subfloat[relative entropy CGP (MBL)]{%
  \includegraphics[clip,width=.94 \columnwidth]{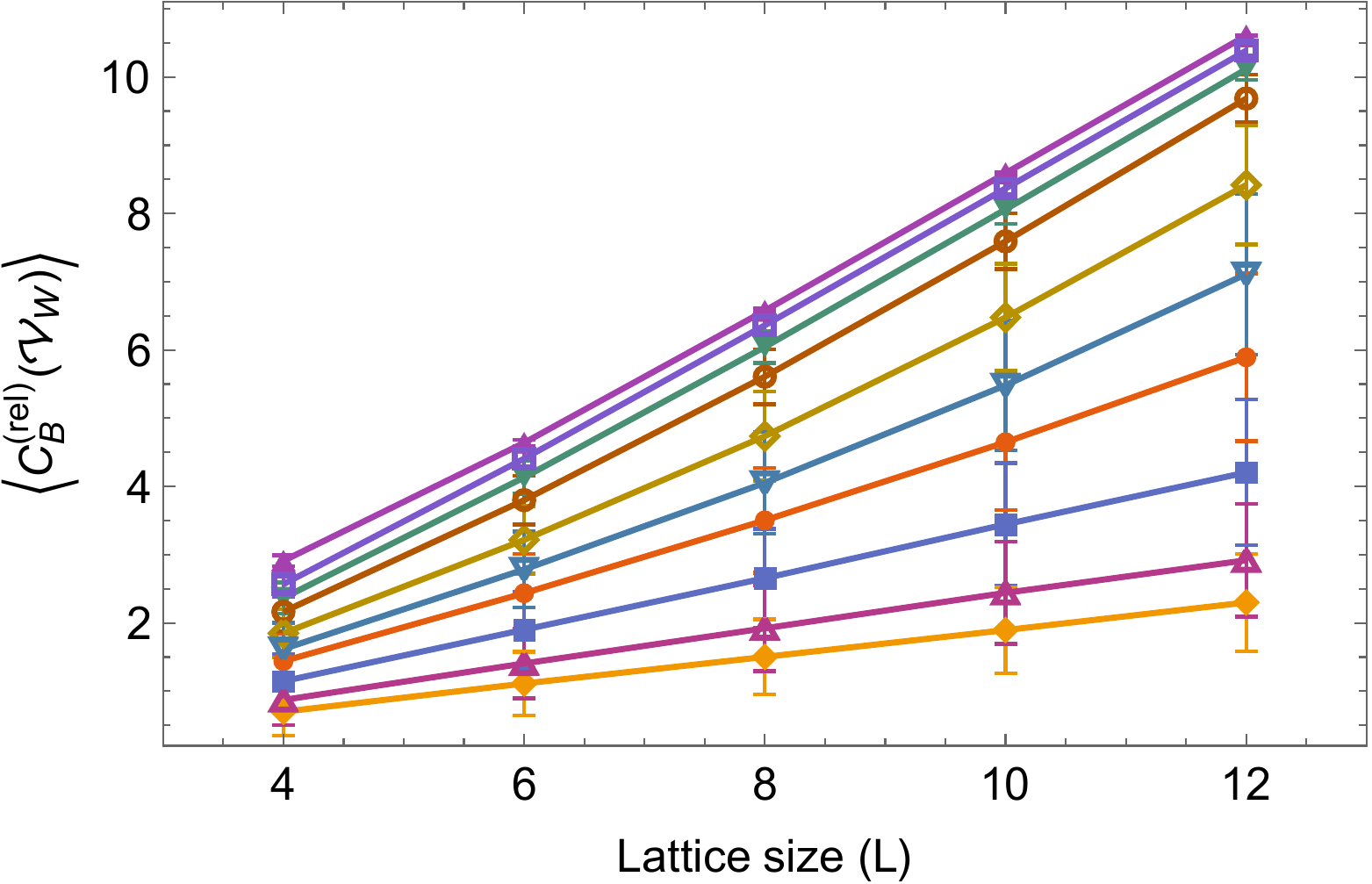}%
}

\caption{Plot of the \textbf{(a)}  average escape probability $\braket{ C_B^{(2)} \left( \mathcal V_W \right) }$ and \textbf{(b)} $\braket{ C_B^{(\rel)} \left( \mathcal V_W \right) }$ as a function of the system size $L$ for different values of the disorder strength $W$. The disorder values displayed are $W = 0.4, 1.0, 1.4, 1.8, 2.5, 3.1, 3.7, 5.0, 7.0, 9.0$ (monotonically from the top to bottom in the plots) for $L=4,6,\cdots, 12$  with sample sizes $\num{20000}, \num{20000}, \num{20000}, \num{8000}, \num{2000}$; except at the $W=3.7$, where the sample sizes were doubled. Error bars represent one standard deviation. Entropy has logarithm with base 2.} 
\label{MBL_details1}
\end{figure}

\begin{figure}[!ht]
\subfloat[time-averaged CGP (MBL)]{%
  \includegraphics[clip,width=.94 \columnwidth]{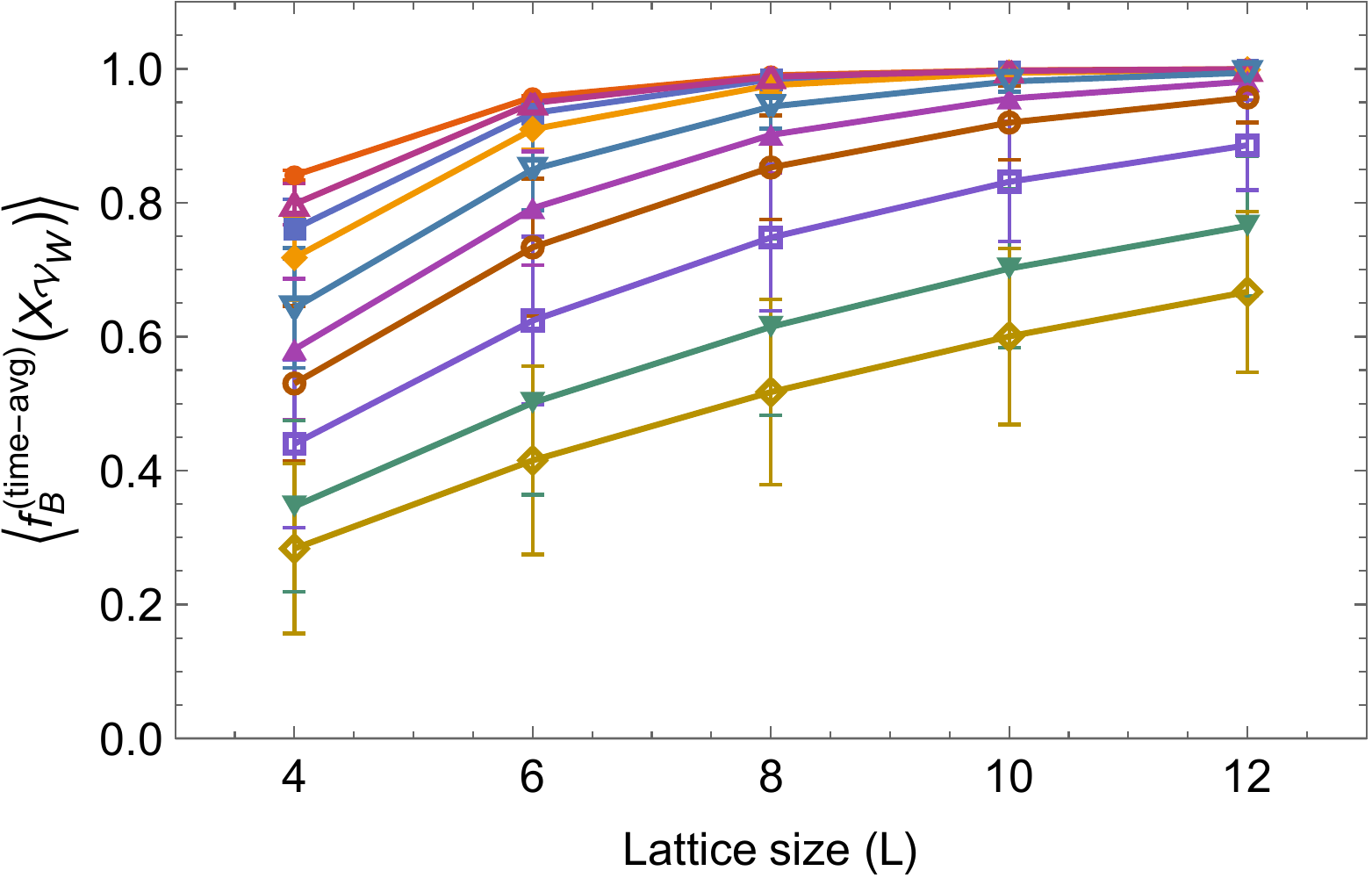}%
}

\subfloat[$f_B^{(\det)}$ CGP (MBL)]{%
  \includegraphics[clip,width=.94 \columnwidth]{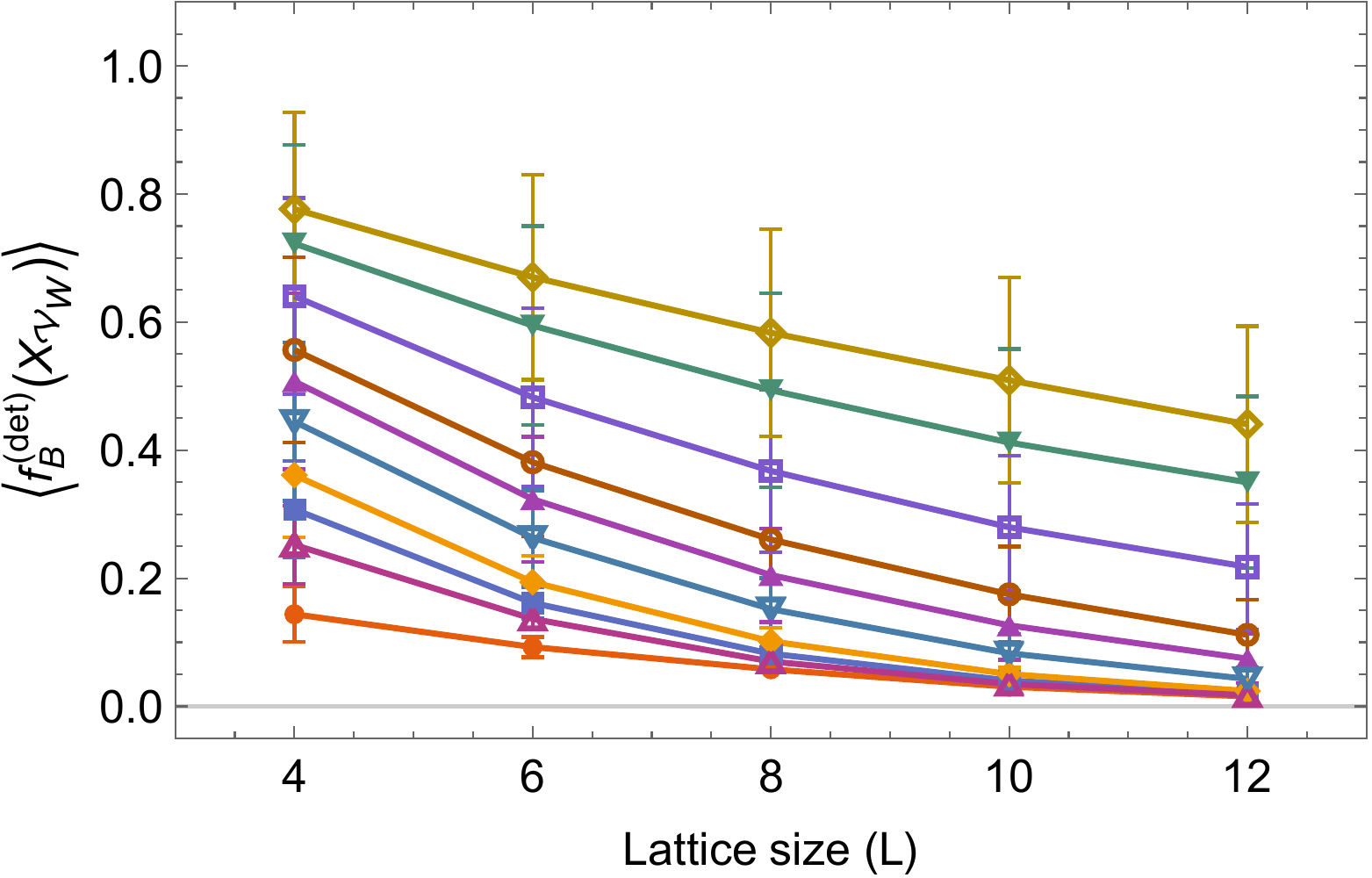}%
}

\subfloat[$f_B^{(\infty)}$ CGP (MBL)]{%
  \includegraphics[clip,width=.94 \columnwidth]{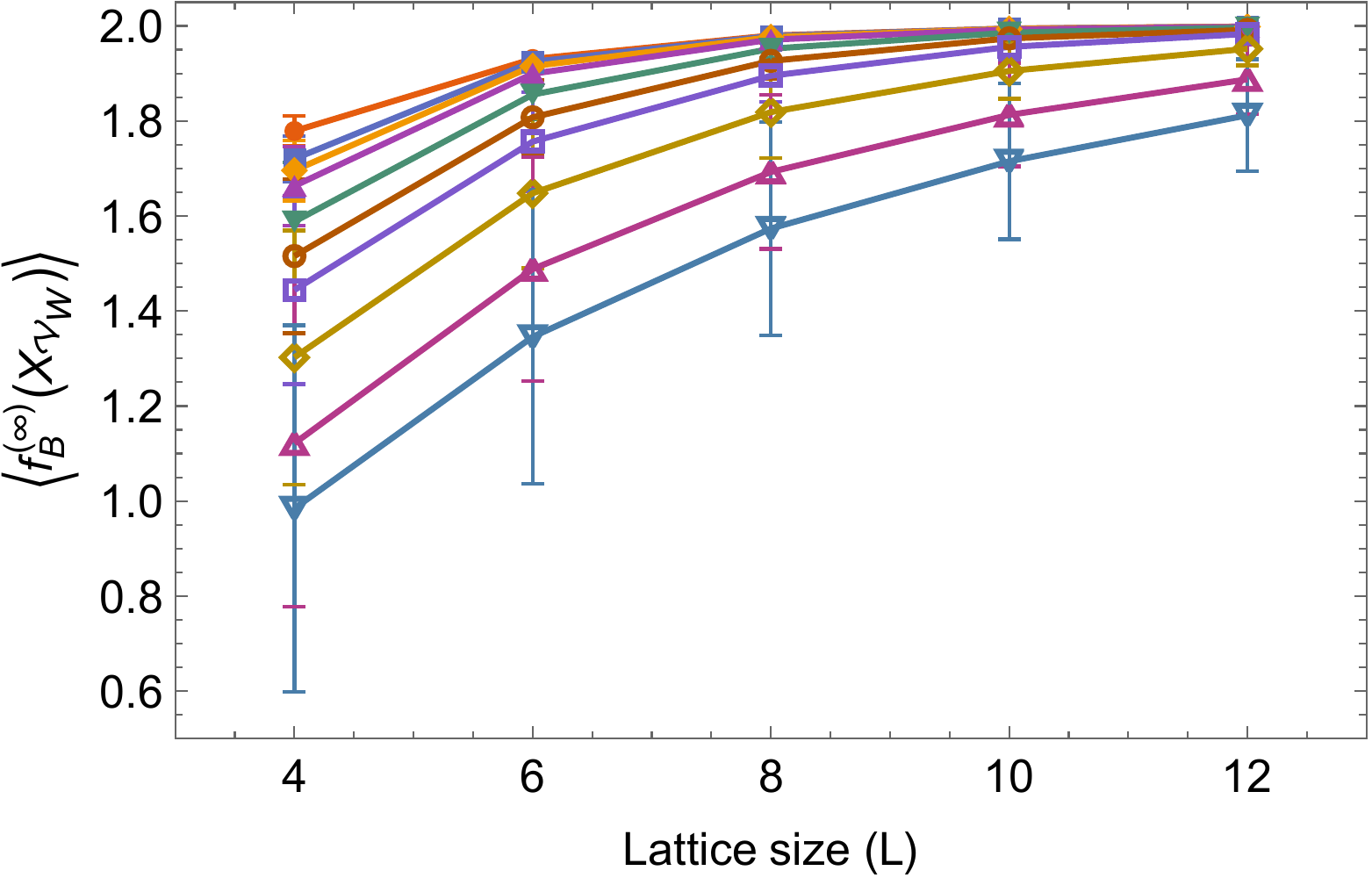}%
}

\caption{Plot of the generalized-CGP measures, \textbf{(a)} $\braket{f_B^{(\text{time-avg})} \left( X_{ \mathcal V_W }\right) }$, \textbf{(b)} $\braket{f_B^{(\text{det})} \left( X_{ \mathcal V_W }\right) }$, and \textbf{(c)} $\braket{f_B^{(\infty)} \left( X_{ \mathcal V_W }\right) }$ as a function of the system size $L$ for different values of the disorder strength $W$. The disorder values displayed are $W = 0.4, 1.0, 1.4, 1.8, 2.5, 3.1, 3.7, 5.0, 7.0, 9.0$ (monotonically from the top to bottom for (a) and bottom to top for (b)) for $L=4,6,\cdots, 12$ with sample sizes $\num{20000}, \num{20000}, \num{20000}, \num{8000}, \num{2000}$; except at the $W=3.7$, where the sample sizes were doubled. Error bars represent one standard deviation.} 
\label{MBL_details2}
\end{figure}

In this section, we list further details of the
quantities studied across the ergodic-MBL transition, namely $1 -\braket{ C_B^{(2)} \left( \mathcal V_W \right) }$, $\braket{ C_B^{(\rel)} \left( \mathcal V_W \right) }$, $\braket{f_B^{(\text{time-avg})} \left( X_{ \mathcal V_W }\right) }$, and $\braket{f_B^{(\text{det})} \left( X_{ \mathcal V_W }\right) }$. In Figure~\ref{Fig_MBL}, we plot the extrapolated rates (for large $L$) as a function of the disorder strength $W$ for the Hamiltonian $H_{\mathrm{XXX}}$ at $h_x=0.3$. For this purpose we consider, e.g., for the return probability $1 -\braket{ C_B^{(2)} \left( \mathcal V_W \right) }$ an ansatz of the form
\begin{equation}
    g(L) = \alpha + 2^{- \lambda L} \,\;,
\end{equation}
where $\alpha$ is the asymptotic value and $\lambda$ is the rate of decay with system size $L$. By performing a nonlinear fit at different disorder values for the various quantities listed above, we found that the $\alpha \approx 0$
(within the uncertainty of the fitting parameters), even for the largest disorder that we consider ($W=9.0$). Therefore, we simplify our ansatz to the form $g(L) \propto 2^{- \lambda L}$ and extract the asymptotic rates by taking the logarithm of the desired quantities. 

In Figures~\ref{MBL_details1} and \ref{MBL_details2} we plot our data for a sample of disorder values and
for system sizes $L=4,\cdots, 12$.
Error bars represent one standard deviation.

\section{Coherence-Generating Power and distance in the Grassmannian} \label{sec_diffgeomsupp}

Here we present in more detail the underlying differential-geometric structure that is introduced in \autoref{sec_diffgeom}.

Let $\cal H$ denote the finite dimensional Hilbert space of the quantum system and ${\mathcal B}({\mathcal H})$ the associate operator algebra. The set  ${\mathcal B}({\mathcal H})$ equipped with the Hilbert-Schmidt scalar product $\braket{X,Y} \coloneqq \Tr \left( X^\dagger Y \right)$ turns into a Hilbert space (the space of Hilbert-Schmidt operators) that we will denote by ${\mathcal H}_{\HS}$. Superoperators $\cal O$ mapping ${\mathcal H}_{\HS}$ into itself can be then endowed with the following norm
\begin{align}
\|{\cal O}\|_{\HS}  \coloneqq \sqrt{ {\mathrm{Tr}}_{\HS} ({\mathcal O}^\dagger {\mathcal O})} \,\;,
\end{align}
where
\begin{inparaenum}[(a)]
\item ${\cal O}^\dagger $ denotes the Hilbert-Schmidt conjugate of $\mathcal O$, i.e., $\braket{\mathcal O (X),Y} = \braket{ X, \mathcal O ^\dagger (Y)}$ $\forall \,X,Y \in \mathcal H_{\HS}$.
\item If $\left\{  \ket{i} \right\}_{i=1}^{d}$ is any orthonormal basis of $\mathcal H$, one defines ${\mathrm{Tr}}_{\HS}{\mathcal O} \coloneqq \sum_{i,j=1}^{d} \braket{ \ket{i}\!\bra{j}, \mathcal O (\ket{i}\!\bra{j})}$. 
\end{inparaenum}

As we discussed in the main text, instead of invoking orthonormal sequences of kets $\left\{ \ket{i} \right\}_{i=1}^{d}$, it is more convenient to work with sets of orthogonal, rank-1 projection operators $B=\left\{P_i \coloneqq  \ket{i}\!\bra{i} \right\}_{i=1}^d$. Let us introduce the space of all such sets over the Hilbert space,  which we denote as $\mathcal M(\mathcal H)$. This is essentially the set of all possible orthonormal bases over the Hilbert space once the phase degrees of freedom and ordering have been modded out \cite{zanardi2018quantum}. The elements $B \in \mathcal M(\mathcal H)$ are in one-to-one correspondence with the set of dephasing super-operators, i.e., the map $B \mapsto \mathcal D_B$ (defined in Eq.~\eqref{Eq_dephasing_def}) is injective.
Given a $B\in{\cal M}_d$, the corresponding set of $B$-diagonal operators is
\begin{align}
{\mathcal A}_B \coloneqq  \Span \left\{P_i\right\}_{i=1}^{d}  \subset {\mathcal H}_{\HS} \,\;,
\label{Eq_A_B}
\end{align}
which is also the range of the $B$-dephasing superoperator ${\cal D}_B.$
One can see that Eq.~\eqref{Eq_A_B} actually defines a Maximally Abelian Sub-Algebra (MASA) of ${\mathcal H}_{\HS}$; moreover 
it can be proven that the set of MASAs of $\mathcal H_{\HS}$ can be identified with  ${\mathcal M}(\mathcal H)$ (see \cite{zanardi2018quantum} for a proof). In this way, the set ${\mathcal M}(\mathcal H)$ can be now seen as 
a subset of the Grassmannian manifold of $d$-dimensional subspaces of ${\mathcal H}_{\HS}$. The advantage of this approach is that $\mathcal M (\mathcal H)$ directly inherits the natural metric structure of the Grassmannian 
\begin{align}
D({\mathcal A}_B, {\mathcal A}_{B'}) \coloneqq  \left\|{\mathcal D}_B - {\mathcal D}_{B'}\right\|_{\HS} \,\;.
\label{MASA-dist}
\end{align}
We will now connect these concepts to the 2-CGP of unitary quantum maps.

From its definition, $C_{B}^{(2)} (\mathcal U)$ seems to capture some notion of separation between the sets $B = \left\{  P_i \right\}_{i=1}^{d}$ and $B' = \left\{  \mathcal U \left(  P_i \right) \right\}_{i=1}^{d}$. In fact, the $B$-coherence generating power of a unitary map $\mathcal U$ is proportional to the (square of the) Grassmannian distance between the input $B$-diagonal algebra ${\cal A}_B$ and its image under $\mathcal U$ \cite{zanardi2018quantum}. Formally:
\begin{align} 
C^{(2)}_{B}({\mathcal U})=\frac{1}{2 d} D({\mathcal A}_B, {\cal U}({\mathcal A}_{{B}}))^2.
\label{C_B-D}
\end{align}
where the distance function $D$ is given by (\ref{MASA-dist}). The maximum of this function i.e., $\max_{\mathcal U} C_B^{(2)} (\mathcal U)=1 - 1/d$ is achieved for unitary operators $\mathcal U$ that connected mutually unbiased bases, namely $\left|\braket{i|U|j} \right| = 1/d$ ($\forall i,j$), and corresponds to a maximum distance over ${\mathcal M}(\mathcal H)$ given by $D^{\max}=\sqrt{2(d-1)}.$ 

It is important to stress that, in the light of \autoref{Prop_CGP_meaning}, the Grassmannian distance between MASAs is endowed with a physical meaning in the context of quantum mechanics.

We now turn to establish a connection between the differential structure of $\mathcal M (\mathcal H)$, as induced by the distance function \eqref{MASA-dist}, and MBL. 
One has the natural Riemannian metric over the Grassmannian
\begin{align}
ds^2 = D(\Pi, \Pi + d\Pi)^2 = \Tr(d\Pi^2)
\end{align}
($\Pi$ denote the projectors over the $d$-dimensional subspaces comprising the Grassmannian).  The latter, in view of Eq.~\eqref{C_B-D}, has in turn the physical interpretation as the $C_B^{(2)}$ of the unitary associated with an infinitesimal transformation $\left\{\ket{\phi_i(\lambda)}\right\}_{i=1}^{d} \mapsto \left\{\ket{\phi_i(\lambda+ d\lambda)}\right\}_{i=1}^{d}$. The form of the metric \eqref{Eq_metric} follows directly by the calculation of Proposition 6 in \cite{zanardi2018quantum}.

\bibliographystyle{apsrev4-1}
\bibliography{Localization,refs}

\end{document}